\newtheorem{proposition}{Proposition}
\newtheorem{theorem}{Theorem}
\newtheorem{lemma}{Lemma}
\newtheorem{corollary}{Corollary}
\newcommand{\shortversion}[1]{}
\newcommand{\longversion}[1]{#1}
\newcommand{\longshort}[2]{\longversion{#1}\shortversion{#2}}
\newcommand{\fo}{\mathcal{FO}}
\newcommand{\rela}{\mathbf{A}}
\newcommand{\pp}{\mathbf{P}}
\newcommand{\ldpt}{\mathrm{lower\textup{-}depth}}
\newcommand{\udpt}{\mathrm{upper\textup{-}depth}}
\newcommand{\dpt}{\mathrm{depth}}
\title{Quantified Conjunctive Queries\\on Partially Ordered Sets}
\begin{document}

\newtheorem{fact}{Fact}
\newtheorem{notation}{Notation}
\newtheorem{claimm}{Claim}

%

\longshort{
\newcommand*\samethanks[1][\value{footnote}]{\footnotemark[#1]}
\author{%
Simone Bova, Robert Ganian, and Stefan Szeider\\
\small Vienna University of Technology\\[-3pt]
\small  Vienna, Austria}

\date{}

}
{
\author{Simone Bova \and Robert Ganian \and Stefan Szeider}

\institute{Vienna University of Technology, Vienna, Austria}}
%

\maketitle
\begin{abstract}
We study the computational problem of checking whether a quantified conjunctive query 
(a first-order sentence built using only conjunction as Boolean connective) 
is true in a finite poset (a reflexive, antisymmetric, and transitive directed graph).  
We prove that the problem is already $\mathrm{NP}$-hard on a certain fixed poset, 
and investigate structural properties of posets yielding fixed-parameter tractability when 
the problem is parameterized by the query.  Our main algorithmic result is that model checking quantified conjunctive queries on posets 
of bounded width is fixed-parameter tractable (the width of a poset is the maximum size of a subset of pairwise incomparable elements).  
We complement our algorithmic result by complexity results 
with respect to classes of finite posets in a hierarchy of natural poset invariants, 
establishing its tightness in this sense. 
\end{abstract}


\section{Introduction}

%

\noindent \textit{Motivation.}  The \emph{model checking} problem for first-order logic 
is the problem of deciding whether a given first-order sentence is true in a given finite structure; 
it encompasses a wide range of fundamental combinatorial problems.  The problem is 
trivially decidable in $O(n^k)$ time, 
where $n$ is the size of the structure and $k$ is the size of the sentence, 
but it is not polynomial-time decidable 
or even \emph{fixed-parameter tractable} when parameterized by $k$ 
(under complexity assumptions in classical and parameterized complexity, respectively).  

Restrictions of the model checking problem to fixed classes of structures or sentences 
have been intensively investigated from the perspective of parameterized algorithms and complexity \cite{ChenDalmau12,Grohe07,GroheKreutzer11}. 
%
%
In particular, starting from seminal work by Courcelle \cite{Courcelle90recognizable} and Seese~\cite{Seese96}, 
structural properties of \emph{graphs} sufficient for fixed-parameter tractability of model checking 
have been identified.  An important outcome of this research is the understanding of the interplay between 
structural properties of graphs and the expressive power of first-order logic, most notably 
the interplay between sparsity and locality, 
culminating in the recent result by Grohe, Kreutzer, and Siebertz 
that model checking first-order logic on classes of 
\emph{nowhere dense} graphs is fixed-parameter tractable \cite{NesetrilOssonadeMendez12,GroheKreutzerSiebertz14}.  
On graph classes closed under subgraphs the result is known to be tight; 
at the same time, there are classes of \emph{somewhere dense} graphs (not closed under subgraphs) 
with fixed parameter tractable first-order (and even monadic second-order) logic model checking; 
the prominent examples are graph classes of bounded clique-width solved by Courcelle, Makowsky, and Rotics \cite{CourcelleMakowskyRotics00}.  

In this paper, we investigate \emph{posets} (short for \emph{partially ordered sets}).  
Posets form a fundamental class of combinatorial objects \cite{GrahamGrotschelLovasz95} and may be viewed as reflexive, 
antisymmetric, and transitive directed graphs.  Besides their naturality, 
our motivation towards posets is that 
they challenge our current model checking knowledge; 
indeed, posets are somewhere dense (but not closed under substructures) 
and have unbounded clique-width \cite[Proposition~5]{BovaGanianSzeider14}.  Therefore, 
not only are they not covered by the aforementioned results \cite{GroheKreutzerSiebertz14,CourcelleMakowskyRotics00}, 
but most importantly, it seems likely that new structural ideas and algorithmic techniques are needed 
to understand and conquer first-order logic on posets.  

In recent work, we started the investigation of first-order logic model checking on finite posets, 
and obtained a parameterized complexity classification of \emph{existential} and \emph{universal} logic 
(first-order sentences in prefix form built using only existential 
or only universal quantifiers) with respect to classes of posets 
in a hierarchy generated by basic poset invariants, including for instance width and depth \cite{BovaGanianSzeider14}.\footnote{Existential and universal logic are maximal syntactic fragments properly contained in first-order logic.}
In particular, as articulated more precisely in \cite{BovaGanianSzeider14}, 
a complete understanding of the first-order case 
reduces to understanding the parameterized complexity of model checking 
first-order logic on bounded width posets (the \emph{width} of a poset is the maximum size of a subset 
of pairwise incomparable elements); these classes are hindered by the same obstructions as general posets, 
since already posets of width $2$ have unbounded clique-width \cite[Proposition~5]{BovaGanianSzeider14}.

\medskip

\noindent \textit{Contribution.}  In this paper we push the tractability frontier 
traced in \cite{BovaGanianSzeider14} closer towards full first-order logic, 
by proving that model checking \emph{(quantified) conjunctive positive} logic 
(first-order sentences built using only conjunction as Boolean connective) is 
tractable on bounded width posets.\footnote{Conjunctive positive logic and existential (respectively, universal) logic are incomparable 
syntactic fragments of first-order logic.}  The problem of model checking conjunctive positive logic on finite structures, 
also known as the \emph{quantified constraint satisfaction} problem, 
has been previously studied with various motivations in various settings \cite{BornerBulatovChenJeavonsKrokhin09,ChenDalmau12}; 
somehow surprisingly, conjunctive logic is also capable of expressing rather interesting poset properties 
(as sampled in Proposition~\ref{prop:cpprop}).

More precisely, our contribution is twofold.  First, we identify conjunctive positive logic as a minimal syntactic fragment of first-order logic 
that allows for full quantification, and has computationally hard expression complexity on posets; 
namely, we prove that \emph{there exists a finite poset where 
model checking (quantified) conjunctive positive logic is $\mathrm{NP}$-hard} (Theorem~\ref{th:exprhard}).  
Next, as 
our main algorithmic result, 
we establish that \emph{model checking conjunctive positive logic on finite posets, 
parameterized by the width of the poset and the size of the sentence, is fixed-parameter tractable} 
with an elementary parameter dependence (Theorem~\ref{th:tractparamwidthsentence}).  
The aforementioned fact that model checking conjunctive positive logic is already $\mathrm{NP}$-hard on a fixed poset 
justifies the relaxation to fixed-parameter tractability by showing that, 
if we insist on polynomial-time algorithms, any structural property of posets 
(captured by the boundedness of a numeric invariant) is negligible.

Informally, the idea of our algorithm is the following.  
First, given a poset $\pp$ and a sentence $\phi$, 
we rewrite the sentence in a simplified form (which we call a \emph{reduced} form), 
equisatisfiable on $\pp$ (Proposition~\ref{prop:nf}).  
Next, using the properties of reduced forms, 
we define a syntactic notion of \lq\lq depth\rq\rq\ of a variable in $\phi$ 
and a semantic notion of \lq\lq depth\rq\rq\ of a subset of $\pp$, 
and we prove that $\pp \models \phi$ if and only if 
$\pp$ verifies $\phi$ upon \lq\lq relativizing\rq\rq\ variables to subsets of matching depth 
(Lemma~\ref{lemma:eloiserestr} and 
and Lemma~\ref{lemma:relativization}).  
The key fact is that 
the size of the subsets of $\pp$ used to relativize the variables of~$\phi$ 
is bounded above by the width of $\pp$ and the size 
of $\phi$ (Lemma~\ref{lemma:boundedsearch}), 
from which the main result follows (Theorem~\ref{th:tractparamwidthsentence}). 
We remark that the approach outlined above differs significantly from the algebraic approach used in \cite{BovaGanianSzeider14}; 
moreover, both stages make essential use of the restriction that conjunction is the only Boolean connective allowed in the sentences.

It follows immediately that 
\emph{model checking conjunctive positive logic on classes of finite posets of bounded width, 
parameterized by the size of the sentence, is fixed-parameter tractable} (Corollary~\ref{cor:bwtract}).  
On the other hand, there exist classes of finite posets of bounded depth 
(the \emph{depth} of a poset is the maximum size of a subset of pairwise comparable elements) 
and classes of finite posets of bounded cover-degree 
(the \emph{cover-degree} of a poset is the degree of its cover relation) 
where model checking conjunctive positive logic is shown to be $\textup{coW}[2]$-hard and hence 
not fixed parameter tractable, unless the exponential time hypothesis~\cite{FlumGrohe06} fails, see Proposition~\ref{prop:allhardnessresults}.  
Combined with the algorithm by Seese~\cite{Seese96}, these facts complete the parameterized complexity classification of the investigated poset invariants, 
as depicted in Figure~\ref{fig:classification}.  




\shortversion{\setcounter{figure}{0}
\begin{SCfigure}
\input{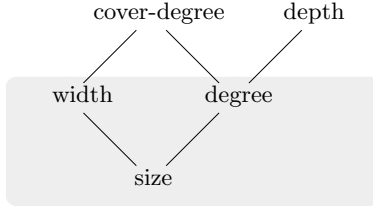}
\hfill 
\caption{On all classes of posets bounded under invariants 
in the gray region, model checking conjunctive positive logic 
is fixed-parameter tractable; on some classes of posets bounded 
under the remaining invariants, the problem 
is not fixed-parameter tractable unless $\textup{FPT}=\textup{coW}[2]$.}\label{fig:classification}
\end{SCfigure}}

\longversion{\setcounter{figure}{0}\begin{figure}
\input{overwiewnotop.pspdftex}
\caption{On all classes of posets bounded under invariants 
in the gray region, model checking conjunctive positive logic 
is fixed-parameter tractable; on some classes of posets bounded 
under the remaining invariants, the problem 
is not fixed-parameter tractable unless $\textup{FPT}=\textup{coW}[2]$.}\label{fig:classification}
\end{figure}}

The classification of conjunctive positive logic in this paper matches
the classification of existential logic in \cite{BovaGanianSzeider14}, 
and further emphasizes the quest for a classification of full first-order logic on bounded width posets.  
We believe that the work presented in this paper and \cite{BovaGanianSzeider14} 
enlightens the spectrum of phenomena that a fixed-parameter tractable algorithm for model checking 
the full first-order logic on bounded width posets, if it exists, has to capture.


\medskip

\noindent \shortversion{\emph{Throughout the paper, we mark with $\star$ all statements whose proofs 
are omitted; we refer to \cite{BovaGanianSzeiderIPEC14} for a full version.}}

\section{Preliminaries}\label{sect:prelim}

For all integers $k \geq 1$, we let $[k]$ denote the set $\{ 1, \ldots, k \}$.  
We focus on relational first-order logic.  
A \emph{vocabulary} $\sigma$ is a 
set of \emph{constant symbols} and \emph{relation symbols}; 
each relation symbol is associated to a natural number called its \emph{arity}; we let $\textup{ar}(R)$ denote the arity of $R \in \sigma$.  
\emph{All vocabularies considered in this paper are finite.} 

\longshort{An \emph{atom} $\alpha$ (over vocabulary $\sigma$) is an equality $t=t'$ 
or an application of a predicate $R t_1 \dots t_{\textup{ar}(R)}$, 
where $t,t',t_1,\dots,t_{\textup{ar}(R)}$ are variable symbols (in a fixed countable set) or constant symbols, and $R \in \sigma$.  
A \emph{formula} (over vocabulary $\sigma$) is built from atoms (over $\sigma$), 
conjunction ($\wedge$), disjunction ($\vee$), implication ($\to$), negation ($\neg$), 
universal quantification ($\forall$), and existential quantification ($\exists$).
A \emph{sentence} is a formula having no free variables. We let $\fo$ 
denote the class of first-order sentences.}{An \emph{atom} $\alpha$ (over vocabulary $\sigma$) is an equality $t=t'$ 
or an application of a predicate $R t_1 \dots t_{\textup{ar}(R)}$, 
where $t,t',t_1,\dots,t_{\textup{ar}(R)}$ are variable symbols (in a fixed countable set) or constant symbols, and $R \in \sigma$.  
We let $\fo$ denote the class of first-order sentences.}

A \emph{structure} $\rela$ (over $\sigma$) is specified by 
a nonempty set $A$, called the \emph{universe} of the structure, 
an element $c^{\rela} \in A$ for each constant symbol $c \in \sigma$, 
and a relation $R^{\rela} \subseteq A^{\textup{ar}(R)}$ for each relation symbol $R \in \sigma$.  
Given a structure $\mathbf{A}$ and $B \subseteq A$ such that 
$\{ c^\mathbf{A} \mid c \in \sigma \} \subseteq B$, 
we denote by $\mathbf{A}|_B$ the substructure of $\mathbf{A}$ induced by $B$, 
defined as follows: the universe of $\mathbf{A}|_B$ is $B$, 
$c^{\mathbf{A}|_B}=c^\mathbf{A}$ for each $c \in \sigma$, 
and $R^{\mathbf{A}|_B}=R^{\mathbf{A}} \cap B^{\textup{ar}(R)}$ 
for all $R \in \sigma$.  A structure is \emph{finite} if its universe is finite and \emph{trivial} if its universe is a singleton.  
\emph{All structures considered in this paper are finite and nontrivial.} 

For a structure $\rela$ and a sentence $\phi$ over the same vocabulary, 
we write $\rela \models \phi$ if the sentence $\phi$ is \emph{true} in the structure $\rela$.  
When $\rela$ is a structure, $f$ is a mapping from the variables to 
the universe of $\rela$, and $\psi(x_1,\ldots,x_n)$ is a formula over the vocabulary of $\rela$,
we write $\rela,f \models \psi$ or (liberally) $\rela \models \psi(f(x_1),\ldots,f(x_n))$ to indicate that $\psi$ is satisfied 
in $\rela$ under $f$.  \longversion{Let $\phi$ and $\psi$ be sentences over the same vocabulary $\sigma$.  
We say that $\phi$ \emph{entails} $\psi$ (denoted $\phi \models \psi$) if, for all structures $\rela$ over $\sigma$, it holds that $\rela \models \phi$ implies $\rela \models \psi$; 
we say that $\phi$ and $\psi$ are \emph{logically equivalent} 
(denoted $\phi \equiv \psi$) if $\phi \models \psi$ and $\psi \models \phi$.}  

%

\longversion{We refer the reader to \cite{FlumGrohe06} for 
the standard algorithmic setup of the model checking problem, 
including the underlying computational model, 
encoding conventions for input structures and sentences, 
and the notion of \emph{size} of the (encoding of an) input structure or sentence.  
We also refer the reader to \cite{FlumGrohe06} for further standard notions in parameterized complexity theory.

Here, we only recall that a \emph{parameterized problem} $(Q,\kappa)$ is a 
\emph{problem} $Q \subseteq \Sigma^*$ together 
with a \emph{parameterization} $\kappa \colon \Sigma^* \to \mathbb{N}$, where $\Sigma$ is a finite alphabet.  
A parameterized problem $(Q,\kappa)$ is \emph{fixed-parameter tractable (w.r.t.\   $\kappa$)}, 
in short \emph{fpt}, if there exists a decision algorithm for $Q$, 
a computable function $f \colon \mathbb{N} \to \mathbb{N}$, 
and a polynomial function $p \colon \mathbb{N} \to \mathbb{N}$, 
such that for all $x \in \Sigma^*$, the running time of the algorithm on $x$ 
is at most $f(\kappa(x)) \cdot p(|x|)$. 

The (parameterized) computational problem under consideration 
is the following.  Let $\sigma$ be a relational vocabulary, 
$\mathcal{C}$ be a class of $\sigma$-structures, 
and $\mathcal{L} \subseteq \fo$ be a class of $\sigma$-sentences.  
The \emph{model checking problem} for $\mathcal{C}$ and $\mathcal{L}$, 
in symbols $\textsc{MC}(\mathcal{C},\mathcal{L})$, is the problem of deciding, 
given $(\mathbf{A},\phi) \in \mathcal{C} \times \mathcal{L}$, 
whether $\mathbf{A} \models \phi$.  The parameterization, given an instance $(\mathbf{A},\phi)$, 
returns the size of the encoding of $\phi$.  \emph{In this paper, $\mathcal{C}$ is usually a class of partially ordered sets, 
and $\mathcal{L}$ is $\fo(\forall,\exists,\wedge)$.}  We let $\|(\mathbf{A},\phi)\|$, 
$\|\mathbf{A}\|$, and $\|\phi\|$ denote, respectively, the size of the instance $(\mathbf{A},\phi)$, the structure $\mathbf{A}$, 
and the sentence $\phi$.}
\shortversion{We refer the reader to \cite{FlumGrohe06} for 
the standard algorithmic setup of the model checking problem, 
and for standard notions in parameterized complexity theory.  
As for notation, the model checking problem for a class of $\sigma$-structures $\mathcal{C}$ 
and a class of $\sigma$-sentences $\mathcal{L} \subseteq \fo$ is denoted by $\textsc{MC}(\mathcal{C},\mathcal{L})$; 
it is the problem of deciding, given $(\mathbf{A},\phi) \in \mathcal{C} \times \mathcal{L}$, 
whether $\mathbf{A} \models \phi$.  We let $\|(\mathbf{A},\phi)\|$, 
$\|\mathbf{A}\|$, and~$\|\phi\|$ denote, respectively, the size of the (encoding of the) instance $(\mathbf{A},\phi)$, the structure $\mathbf{A}$, 
and the sentence $\phi$.  The parameterization of an instance $(\mathbf{A},\phi)$ 
returns $\|\phi\|$.}

\medskip

\noindent \textit{Conjunctive Positive Logic.} In this paper, we study the \emph{(quantified) conjunctive positive} fragment of first-order logic, 
in symbols $\fo(\forall,\exists,\wedge)$, 
containing first-order sentences built using only logical symbols in $\{\forall,\exists,\wedge\}$.

A conjunctive positive sentence is in \emph{alternating prefix form} if it has the form 
\begin{equation}\label{eq:strictform}
\phi=\forall x_1 \exists y_1 \ldots \forall x_l \exists y_l C(x_1,y_1,\ldots,x_l,y_l) \text{,} 
\end{equation}
where $l \geq 0$ and $C(x_1,y_1,\ldots,x_l,y_l)$ is a conjunction of atoms 
whose variables are contained in $\{x_1,y_1,\ldots,x_l,y_l\}$; it is possible to reduce 
any conjunctive positive sentence to a logically equivalent conjunctive positive 
sentence of form (\ref{eq:strictform}) in polynomial time.  For a simpler exposition, 
\emph{every conjunctive positive sentence considered in this paper 
is assumed to be given in alternating prefix form 
(or is implicitly reduced to that form if required by the context).}

Let $\sigma$ be a relational vocabulary.  Let $\rela$ be a $\sigma$-structure 
and let $\phi$ be a conjunctive positive $\sigma$-sentence as in (\ref{eq:strictform}).  
It is well known that the 
truth of $\phi$ in $\rela$ can be characterized in terms of the \emph{Hintikka (or model checking) game} 
on $\rela$ and $\phi$.  The game is played by two players, Abelard (male, 
the \emph{universal} player) and Eloise (female, the \emph{existential} player), 
as follows.  For increasing values of~$i$ from $1$ to $l$, 
Abelard assigns $x_i$ to an element $a_i \in A$, 
and Eloise assigns $y_i$ to an element $b_i \in A$; 
the sequence $(a_1,b_1,\ldots,a_l,b_l)$ is called a \emph{play} on $\rela$ and $\phi$, 
where $(a_1,\ldots,a_l)$ and $(b_1,\ldots,b_l)$ are the plays by Abelard and Eloise respectively;  
Eloise wins if and only if 
\shortversion{$\mathbf{A} \models C(a_1,b_1,\ldots,a_l,b_l)$.}
\longversion{$$\mathbf{A} \models C(a_1,b_1,\ldots,a_l,b_l)\text{.}$$}

A \emph{strategy for Eloise} (in the Hintikka game on $\rela$ and $\phi$) is a sequence $(g_1,\ldots,g_l)$ of functions 
of the form $g_i \colon A^i \to A$, for all $i\in [l]$; it \emph{beats} a play $f \colon \{x_1,\ldots,x_l\} \to A$ 
by Abelard 
\shortversion{if $\mathbf{A} \models C(f(x_1),g_1(f(x_1)),\ldots,f(x_i),g_i(f(x_1),\ldots,f(x_i)),\ldots)$, where $i \in [l]$.}
\longversion{if $$\mathbf{A} \models C(f(x_1),g_1(f(x_1)),\ldots,f(x_i),g_i(f(x_1),\ldots,f(x_i)),\ldots)\text{,}$$ where $i \in [l]$.}
A strategy for Eloise is \emph{winning} (in the Hintikka game on $\rela$ and $\phi$) if it beats all Abelard plays.  
It is well known (and easily verified) that $\rela \models \phi$ if and only if 
Eloise has a winning strategy (in the Hintikka game on $\rela$ and $\phi$).  

\longversion{For $X_1,Y_1,\ldots,X_l,Y_l \subseteq A$, we freely denote by 
\begin{equation}\label{eq:relatstrictform}
\phi'=(\forall x_1 \in X_1)(\exists y_1 \in Y_1)\ldots(\forall x_l \in X_l)(\exists y_l \in Y_l)C(x_1,y_1,\ldots,x_l,y_l) \text{,} 
\end{equation}
the relativization in $\phi$ of variable $x_i$ to $X_i$ and $y_i$ to $Y_i$ for all $i \in [l]$.  
We liberally write $\rela \models \phi'$ to mean that $\rela^* \models \phi^*$, 
where $\rela^*$ and $\phi^*$ have vocabulary $\sigma \cup \{X_1,Y_1,\ldots,X_l,Y_l\}$, 
the $\sigma$-reduct of $\rela^*$ is equal to $\rela$, 
$X_i^{\rela^*}=X_i$ and $Y_i^{\rela^*}=Y_i$ for all $i \in [l]$, 
and 
\begin{equation*}
\phi^*=\forall x_1 \exists y_1 \ldots \forall x_l \exists y_l( X_1x_1 \to( Y_1y_1 \wedge ( \cdots X_lx_l \to( Y_ly_l \wedge C ) \cdots ) ) ) \text{.} 
\end{equation*}}
\shortversion{For $X_1,Y_1,\ldots,X_l,Y_l \subseteq A$, we denote by 
\begin{equation}\label{eq:relatstrictform}
\phi'=(\forall x_1 \in X_1)(\exists y_1 \in Y_1)\ldots(\forall x_l \in X_l)(\exists y_l \in Y_l)C(x_1,y_1,\ldots,x_l,y_l) \text{,} 
\end{equation}
the relativization in $\phi$ of variable $x_i$ to $X_i$ and $y_i$ to $Y_i$ for all $i \in [l]$, 
and liberally write $\rela \models \phi'$ meaning that $\phi'$ is satisfied in the intended expansion of $\rela$.}  
It is readily verified that, if $\phi'$ is as in (\ref{eq:relatstrictform}), 
then $\rela \models \phi'$ if and only if, in the Hintikka game on $\rela$ and $\phi$,  
Eloise has a strategy of the form $g_i \colon X_1 \times \cdots \times X_i \to Y_i$ for all $i \in [l]$, 
beating all plays $f$ by Abelard such that $f(x_i) \in X_i$ for all $i \in [l]$. 

\longversion{\medskip

\noindent \textit{Partially Ordered Sets.} A structure $\mathbf{G}=(V,E^\mathbf{G})$ 
with $\textup{ar}(E)=2$ is called a \emph{digraph}.  Two digraphs $\mathbf{G}$ 
and $\mathbf{H}$ are \emph{isomorphic} if there exists a bijection $f \colon G \to H$ 
such that for all $g,g' \in G$ it holds that $(g,g') \in E^\mathbf{G}$ 
if and only if $(f(g),f(g')) \in E^\mathbf{H}$.  

Let $\mathbf{G}$ be a digraph.  The \emph{degree} of $g \in G$, in symbols $\textup{degree}(g)$,  
is equal to $|\{ (g',g) \in E^\mathbf{G} \mid g' \in G \} 
\cup \{ (g,g') \in E^\mathbf{G} \mid g' \in G \}|$, 
and the \emph{degree} of $\mathbf{G}$, in symbols $\textup{degree}(\mathbf{G})$, 
is the maximum degree attained by the elements of $\mathbf{G}$.

A digraph $\pp=(P,\leq^\pp)$ is a \emph{partially ordered set} (in short, a \emph{poset}) if $\leq^\pp$ is a \emph{reflexive}, 
\emph{antisymmetric}, and \emph{transitive} relation over $P$, 
that is, respectively, $\pp \models \forall x(x \leq x)$, 
$\pp \models \forall x \forall y((x \leq y \wedge y \leq x) \to x=y)$, 
and $\pp \models \forall x \forall y \forall z((x \leq y \wedge y \leq z) \to x \leq z)$.  

An element $p \in P$ is \emph{minimal} if $\pp \models \forall x(x \leq p \to x=p)$, 
and \emph{maximal} if $\pp \models \forall x(x \geq p \to x=p)$.  For all $Q \subseteq P$, 
we let $\mathrm{min}^\pp(Q)$ and $\mathrm{max}^\pp(Q)$ denote, respectively, 
the set of minimal and maximal elements in the substructure of $\pp$ 
induced by $Q$; we also write $\mathrm{min}(\pp)$ instead of $\mathrm{min}^\pp(P)$, 
and $\mathrm{max}(\pp)$ instead of $\mathrm{max}^\pp(P)$.  An element $b \in P$ 
such that $\pp \models \forall x(b \leq x)$ is called the \emph{bottom} of $\pp$, 
and similarly an element $t \in P$ such that $\pp \models \forall x(x \leq t)$ is called the \emph{top} of $\pp$ 
(uniqueness of top and bottom, if they exists, is clear).

For all $Q \subseteq P$, 
we let $(Q]^{\mathbf{P}}$ denote the \emph{downset} of $\mathbf{P}$ induced by $Q$, that is, 
$(Q]^{\mathbf{P}}=\{ p \in P \mid \text{there exists $q \in Q$ such that $p \leq^{\mathbf{P}} q$}\}$; if $Q=\{q\}$, we also write $(q]^\mathbf{P}$ 
instead of $(\{q\}]^\mathbf{P}$.  The \emph{upset} of $\mathbf{P}$ induced by $Q$ is defined dually, 
$[Q)^{\mathbf{P}}=\{ p \in P \mid \text{there exists $q \in Q$ such that $p \geq^{\mathbf{P}} q$}\}$.  

Let $\pp$ be a poset and let $p,q \in P$.  We say that $q$ \emph{covers} $p$ in $\pp$ (denoted $p \prec^{\pp} q$) 
if $p<^{\pp}q$ and, for all $r \in P$, $p \leq^{\pp} r <^{\pp}q$ implies $p=r$.  We say that $p$ and $q$ are \emph{incomparable} 
in $\pp$ (denoted $p \parallel^{\pp} q$) if 
$\pp \not\models p \leq q \vee q\leq p$.  
The \emph{cover graph} of $\pp$ is the digraph $\textup{cover}(\pp)$ with vertex set $P$ 
and edge set $\{ (p,q) \mid p \prec^{\pp} q \}$.  It is well known that 
computing the cover relation corresponding to a given order relation, 
and vice versa the order relation corresponding to a given cover relation, 
is feasible in polynomial time \cite{Schroder03}.  If $\mathcal{P}$ 
is a class of posets, we let $\textup{cover}(\mathcal{P})=\{ \textup{cover}(\pp) \mid \pp \in \mathcal{P} \}$. 

In this paper, a poset $\pp$ is pictorially 
represented by its \emph{Hasse diagram}, that is a planar drawing of $\textup{cover}(\pp)$ 
where all edges are oriented upwards (thus, in the actual drawing, orientations are neglected).  
 
A \emph{chain} in $\pp$ is a subset $C \subseteq P$ 
such that $p \leq^{\pp} q$ or $q \leq^{\pp} p$ for all $p,q \in C$; 
in particular, if $P$ is a chain in $\pp$, 
we call $\pp$ itself a chain.  
An \emph{antichain} in $\pp$ is a subset $A \subseteq P$ 
such that $p \parallel^{\pp} q$ for all $p,q \in A$; 
in particular, if $P$ is an antichain in $\pp$, 
we call $\pp$ itself an antichain.  

Let $\mathcal{P}$ be the class of all posets.  A \emph{poset invariant} 
is a mapping $\textup{inv} \colon \mathcal{P} \to \mathbb{N}$ such that $\textup{inv}(\mathbf{P})=\textup{inv}(\mathbf{Q})$ 
for all $\mathbf{P},\mathbf{Q} \in \mathcal{P}$ where $\mathbf{P}$ and $\mathbf{Q}$ are isomorphic.  
Let $\textup{inv}$ be any invariant over $\mathcal{P}$.  Let $\mathcal{P}$ be any class of posets.  
We say that $\mathcal{P}$ is \emph{bounded} w.r.t.\   $\textup{inv}$ if there exists $b\in \mathbb{N}$ such that 
$\textup{inv}(\mathcal{P})\leq b$ for all $\mathbf{P} \in \mathcal{P}$.  Two poset invariants 
$\textup{inv}$ and $\textup{inv}'$ are naturally ordered by stipulating 
that $\textup{inv} \leq \textup{inv}'$ if and only if 
for every class $\mathcal{P}$ of posets, if $\mathcal{P}$ is bounded w.r.t.\   $\textup{inv}$, 
then $\mathcal{P}$ is bounded w.r.t.\   $\textup{inv}'$.

We introduce a family of natural poset invariants. Let $\pp$ be a poset.  
The \emph{size} of $\pp$ is the cardinality of its universe, $|P|$.   
The \emph{depth} of $\pp$, in symbols $\textup{depth}(\pp)$, 
is the maximum size attained by a chain in $\pp$.  
The \emph{width} of $\pp$, in symbols $\textup{width}(\pp)$, 
is the maximum size attained by an antichain in $\pp$.  
The \emph{degree} of $\pp$, in symbols $\textup{degree}(\pp)$, 
is the degree of $\pp$ as a digraph.  
The \emph{cover-degree} of $\pp$, 
in symbols $\textup{cover\textup{-}degree}(\pp)$, is the degree of the cover relation of $\pp$, 
that is, $\textup{degree}(\textup{cover}(\pp))$.  In \cite[Proposition~3]{BovaGanianSzeider14}, 
we prove that such poset invariants are ordered as in Figure~\ref{fig:classification}.}

\shortversion{\medskip

\noindent \textit{Partially Ordered Sets.} We refer the reader to \cite{CaspardLeclercMonjardet12} 
for the few standard notions in order theory used in the paper but not defined below.

A structure $\mathbf{G}=(G,E^\mathbf{G})$ 
with $\textup{ar}(E)=2$ is called a \emph{digraph}.  Two digraphs $\mathbf{G}$ 
and $\mathbf{H}$ are \emph{isomorphic} if there exists a bijection $f \colon G \to H$ 
such that for all $g,g' \in G$ it holds that $(g,g') \in E^\mathbf{G}$ 
if and only if $(f(g),f(g')) \in E^\mathbf{H}$.  The \emph{degree} of $g \in G$, in symbols $\textup{degree}(g)$,  
is equal to $|\{ (g',g) \in E^\mathbf{G} \mid g' \in G \} 
\cup \{ (g,g') \in E^\mathbf{G} \mid g' \in G \}|$, 
and the \emph{degree} of $\mathbf{G}$, in symbols $\textup{degree}(\mathbf{G})$, 
is the maximum degree attained by the elements of $\mathbf{G}$.

A digraph $\pp=(P,\leq^\pp)$ is a \emph{partially ordered set} (in short, a \emph{poset}) if $\leq^\pp$ is a reflexive, 
antisymmetric, and transitive relation over $P$.  For all $Q \subseteq P$, 
we let $\mathrm{min}^\pp(Q)$ and $\mathrm{max}^\pp(Q)$ denote, respectively, 
the set of minimal and maximal elements in the substructure of $\pp$ 
induced by $Q$; we also write $\mathrm{min}(\pp)$ instead of $\mathrm{min}^\pp(P)$, 
and $\mathrm{max}(\pp)$ instead of $\mathrm{max}^\pp(P)$.  
For all $Q \subseteq P$, we let $(Q]^{\mathbf{P}}$, respectively $[Q)^{\mathbf{P}}$, 
denote the downset, respectively upset, of $\mathbf{P}$ induced by $Q$.  
Let $\pp$ be a poset and let $p,q \in P$.  We write $p \prec^{\pp} q$ 
if $q$ covers $p$ in $\pp$, and $p \parallel^{\pp} q$ if $p$ and $q$ are incomparable 
in $\pp$.  If $\mathcal{P}$ is a class of posets, 
we let $\textup{cover}(\mathcal{P})=\{ \textup{cover}(\pp) \mid \pp \in \mathcal{P} \}$, 
where $\textup{cover}(\pp)=\{ (p,q) \mid p \prec^{\pp} q \}$.  

 

\longversion{Let $\mathcal{P}$ be the class of all posets.  A \emph{poset invariant} 
is a mapping $\textup{inv} \colon \mathcal{P} \to \mathbb{N}$ such that $\textup{inv}(\mathbf{P})=\textup{inv}(\mathbf{Q})$ 
for all $\mathbf{P},\mathbf{Q} \in \mathcal{P}$ where $\mathbf{P}$ and $\mathbf{Q}$ are isomorphic.  
Let $\textup{inv}$ be any invariant over $\mathcal{P}$.  Let $\mathcal{P}$ be any class of posets.  
We say that $\mathcal{P}$ is \emph{bounded} w.r.t.\   $\textup{inv}$ if there exists $b\in \mathbb{N}$ such that 
$\textup{inv}(\mathcal{P})\leq b$ for all $\mathbf{P} \in \mathcal{P}$.  Two poset invariants 
$\textup{inv}$ and $\textup{inv}'$ are naturally ordered by stipulating 
that $\textup{inv} \leq \textup{inv}'$ if and only if 
for every class $\mathcal{P}$ of posets, if $\mathcal{P}$ is bounded w.r.t.\   $\textup{inv}$, 
then $\mathcal{P}$ is bounded w.r.t.\   $\textup{inv}'$.}

We introduce a family of poset invariants. Let $\pp$ be a poset.  
The \emph{size} of $\pp$ is $|P|$.   
The \emph{depth} of $\pp$, $\textup{depth}(\pp)$, 
is the maximum size of a chain in $\pp$.  
The \emph{width} of $\pp$, $\textup{width}(\pp)$, 
is the maximum size of an antichain in $\pp$.  
The \emph{degree} of $\pp$, $\textup{degree}(\pp)$, 
is the degree of $\pp$ as a digraph.  
The \emph{cover-degree} of $\pp$, 
$\textup{cover\textup{-}degree}(\pp)$, is the degree of the cover relation of $\pp$, 
that is, $\textup{degree}(\textup{cover}(\pp))$.  We say that a class of posets 
$\mathcal{P}$ is \emph{bounded} w.r.t.\  the poset invariant $\textup{inv}$ 
if there exists $b\in \mathbb{N}$ such that 
$\textup{inv}(\mathcal{P})\leq b$ for all $\mathbf{P} \in \mathcal{P}$. The above poset invariants are ordered as in Figure~\ref{fig:classification}, where 
$\textup{inv} \leq \textup{inv}'$ if and only if: 
$\mathcal{P}$ is bounded w.r.t.\   $\textup{inv}$ implies $\mathcal{P}$ is bounded w.r.t.\   $\textup{inv}'$ 
for every class of posets $\mathcal{P}$ \cite[Proposition~3]{BovaGanianSzeider14}.}

\section{Expression Hardness}\label{sect:m1}

In this section we prove that conjunctive positive logic on posets 
is $\mathrm{NP}$-hard in expression complexity.  Let $\mathbf{B}=(B,\leq^\mathbf{B})$ be the \emph{bowtie} poset 
defined by the universe $B=\{0,1,2,3\}$ and the covers $0,2 \prec^\mathbf{B} 1,3$; see 
Figure~\ref{fig:exprhard}.



\begin{theorem}\label{th:exprhard}
$\textsc{MC}(\{\mathbf{B}\},\mathcal{FO}(\forall,\exists,\wedge))$ is $\textup{NP}$-hard.
\end{theorem}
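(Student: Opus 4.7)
The plan is to give a polynomial-time reduction from a classical $\textup{NP}$-hard problem --- most naturally $3$-SAT, or a monotone variant such as Positive $1$-in-$3$ SAT --- to $\textsc{MC}(\{\mathbf{B}\}, \mathcal{FO}(\forall,\exists,\wedge))$. Given a $3$-CNF formula $\phi$ on variables $p_1, \ldots, p_n$ with clauses $C_1, \ldots, C_m$, I would build a conjunctive positive sentence $\Phi$ of size polynomial in $|\phi|$ such that $\mathbf{B} \models \Phi$ if and only if $\phi$ is satisfiable; the asserted $\textup{NP}$-hardness of the model checking problem on the fixed poset $\mathbf{B}$ follows immediately.

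The encoding would exploit the two antichains of $\mathbf{B}$: the minimal pair $\{0,2\}$ and the maximal pair $\{1,3\}$. Each pair consists of two elements indistinguishable by $\leq$-atoms alone, which suggests their use as a binary alphabet for truth values. For each SAT variable $p_i$ I would introduce an existentially quantified variable $x_i$ whose intended values encode the truth value of $p_i$. From the Hintikka game perspective, Eloise's existential choices build the truth assignment, while Abelard's universal moves will pose challenges against the individual clauses.

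The main obstacle is encoding disjunction, which $\mathcal{FO}(\forall,\exists,\wedge)$ strictly lacks. My plan is to simulate a clause $C_j = l_{j,1} \vee l_{j,2} \vee l_{j,3}$ by a gadget built from a universal variable $w_j$ (Abelard's attack) and existential response variables, arranged so that Eloise can satisfy the conjoined atoms in response to Abelard's choice of $w_j$ if and only if some literal of $C_j$ is made true by the assignment encoded by the $x_i$. A delicate point is that $w_j$ necessarily ranges over all four elements of $B$, rather than just over the three literal slots, so the gadget must exploit the crossing structure of $\mathbf{B}$ --- each minimum lies below each maximum --- to make the ``extra'' Abelard moves trivially winnable for Eloise without contaminating the intended disjunctive semantics on the three genuine attack moves. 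With this gadget in hand, I would verify that the whole construction is of polynomial size and that satisfying assignments of $\phi$ correspond to winning strategies of Eloise in the Hintikka game on $(\mathbf{B}, \Phi)$, completing the reduction.
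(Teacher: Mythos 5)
There is a genuine gap: the clause gadget is never constructed, and that gadget is the entire content of the theorem. You correctly identify that $\mathcal{FO}(\forall,\exists,\wedge)$ has no disjunction and that a clause $l_{j,1}\vee l_{j,2}\vee l_{j,3}$ must somehow be simulated by a universal variable plus existential responses over the fixed four-element bowtie, but you then only assert that the atoms can be ``arranged so that'' Eloise wins iff some literal is true. Nothing in the proposal shows such an arrangement exists, and the structure of $\mathbf{B}$ makes this far from obvious: the two minimal elements $0,2$ have exactly the same upper bounds ($\{1,3\}$) and the two maximal elements have exactly the same lower bounds, and the automorphism group of $\mathbf{B}$ acts by independently swapping $0\leftrightarrow 2$ and $1\leftrightarrow 3$, so $\leq$-atoms can only record \emph{relative} information (equality or inequality of choices within a level), not an absolute true/false label. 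A reduction along your lines would in effect have to reprove from scratch the combinatorial core of why inequality systems over the bowtie are hard, and the sketch gives no evidence that the ``crossing structure'' suffices to neutralize Abelard's off-script moves without also trivializing the on-script ones.

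For contrast, the paper avoids building any disjunction gadget. It starts from the known $\textup{NP}$-hardness of the purely existential problem $\textsc{MC}(\{\mathbf{B}^*\},\mathcal{FO}_{\sigma}(\exists,\wedge))$, where $\mathbf{B}^*$ is the bowtie expanded with a constant for each element (Pratt--Tiuryn), and the only new work is to eliminate the constants: a block $\forall y_0\cdots\forall y_3$ followed by existential variables $w_0,\ldots,w_3$ and a fixed conjunction $\alpha$ forces $\{w_0,w_2\}=\{0,2\}$ and $\{w_1,w_3\}=\{1,3\}$ whenever Abelard plays a ``nontrivial'' (bijective) assignment, while trivial Abelard plays are always winnable for Eloise. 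If you want to salvage your approach, the honest path is either to supply and verify an explicit clause gadget (a substantial construction) or to do what the paper does and reduce from an already-hard conjunctive problem on $\mathbf{B}$ with constants, using universal quantification only to pin down the constants.
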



\newcommand{\pfclnpharda}[0]{
\begin{proof} 
Let $f$ be a trivial assignment.  
Then, $\{f(y_0),f(y_2)\} \neq \{0,2\}$ or $\{f(y_1),f(y_3)\} \neq \{1,3\}$.  
We prove that the statement holds if $\{f(y_0),f(y_2)\} \neq \{0,2\}$; 
the case $\{f(y_1),f(y_3)\} \neq \{1,3\}$ is symmetric.

We distinguish two subcases.  First, assume that  $f(y_0)=b \in \{0,2\}$.  Since $\{f(y_0),f(y_2)\} \neq \{0,2\}$, 
either $f(y_2)=b$ or $f(y_2) \in \{1,3\}$.  Hence $b \leq^{\mathbf{B}} f(y_2)$.  For $i \in \{1,3\}$, let $b_i$ be any element of $B$ 
such that $b, f(y_i) \leq^\mathbf{B} b_i$; note that such a $b_i$ 
exists by construction of $\mathbf{B}$.  We now define the required extension $f'$ of $f$ 
by putting $f'(x_0)=f'(x_2)=f'(w_0)=f'(w_2)=f'(w_1)=f'(w_3)=b$, 
$f'(x_1)=b_1$, and $f'(x_3)=b_3$.  By inspection of $\mathbf{B}$, 
it holds that $\mathbf{B},f' \models \alpha$, and the first subcase is settled.

Second, assume that $f(y_0)=b \in \{1,3\}$.  By construction of $\mathbf{B}$, 
there exists $b_2 \in \{0,2\}$ such that $b_2 \leq^\mathbf{B} f(y_2), b$; such a $b_2$ is unique 
if $f(y_2) \in \{0,2\}$.  Moreover, there exist 
$b_1,b_3 \in B$ such that $b_2,f(y_1) \leq^\mathbf{B} b_1$ 
and $b_2,f(y_3) \leq^\mathbf{B} b_3$.  The required extension $f'$ of $f$ is defined 
by letting $f'(x_0)=f'(x_2)=f'(w_0)=f'(w_2)=f'(w_1)=f'(w_3)=b_2$, 
$f'(x_1)=b_1$, and $f'(x_3)=b_3$.  The second subcase is settled, 
and the claim is proved.
\shortversion{\qed}\end{proof}}

\newcommand{\pfclnphardb}[0]{
\begin{proof} 
Let $f$ be a trivial assignment.  By Claim~\ref{cl:nphard1}, 
let $f'$ be an extension of $f$ to the variables of $\alpha$ such that $\mathbf{B},f' \models \alpha$ and $f'(w_0)=f'(w_2)=f'(w_1)=f'(w_3)=b\in B$.  
Since $\psi'$ is a conjunction of atoms of the form $u \leq u'$, or $w_i \leq u$, or $u \leq w_i$ 
(where $u$ and $u'$ are variables not occurring in $\alpha$, and $i \in B$), 
and since $\leq^\mathbf{B}$ is reflexive, any assignment $f''$ that extends $f'$ by assigning 
all variables in $\psi'$ not occurring in $\alpha$ to $b$ 
is such that $\mathbf{B},f'' \models \alpha \wedge \psi'$, which settles the claim.
\shortversion{\qed}\end{proof}}

\newcommand{\pfclnphardc}[0]{
\begin{proof} 
Let $f$ be a nontrivial assignment, 
say $f(y_i)=b_i$ for all $i \in B$, $\{b_0,b_2\}=\{0,2\}$ and $\{b_1,b_3\}=\{1,3\}$.  We prove the two statements.

$(i)$  Clearly the extension $f'$ of $f$ defined by $f'(x_i)=f'(w_i)=b_i$ for all $i \in B$ 
verifies $\mathbf{B},f' \models \alpha$.

$(ii)$ Let $f'$ be any extension of $f$ to the variables of $\alpha$ such that $\mathbf{B},f' \models \alpha$.  We prove 
that $f'(x_i)=f'(w_i)=b_i$ for all $i \in B$, which suffices.  For $i \in \{0,2\}$, 
the atom $x_i \leq y_i$ in $\alpha$ forces $f'(x_i)=f'(y_i)=b_i$ because $b_i$ is minimal in $\mathbf{B}$.  
So $\{f'(x_0),f'(x_2)\}=\{0,2\}$.  
Similarly, for $i \in \{1,3\}$, the atom $y_i \leq x_i$ forces $f'(x_i)=f'(y_i)=b_i$ because $b_i$ is maximal in $\mathbf{B}$.  
So $\{f'(x_1),f'(x_3)\}=\{1,3\}$.  
For $i \in \{0,2\}$, the atoms $x_i \leq w_i \leq w_1 \leq x_1$ and $x_i \leq w_i \leq w_1 \leq x_3$ 
force $f'(w_i)=f'(x_i)=b_i$ because $b_i$ is the unique element in $B$ below both $1$ and $3$.  Similarly, for $i \in \{1,3\}$, the atoms $x_0 \leq w_0 \leq w_i \leq x_i$ and $x_2 \leq w_2 \leq w_i \leq x_i$ 
force $f'(w_i)=f'(x_i)=b_i$ because $b_i$ is the unique element in $B$ above $0$ and $2$ 
and below $b_i \in \{1,3\}$.
\shortversion{\qed}\end{proof}}

\newcommand{\pfclnphardd}[0]{
\begin{proof} 
Let $f$ be a nontrivial assignment.

$(i) \Rightarrow (ii)$ Assume $\mathbf{B},f \models \exists x_0 \ldots \exists x_3\exists w_0 \ldots \exists w_3( \alpha \wedge \psi' )$.  
Therefore, there exists an assignment $f'$ extending $f$ such that 
$\mathbf{B},f' \models \alpha \wedge \psi'$.  In particular, 
$\mathbf{B},f' \models \alpha$, hence by Claim~\ref{cl:nphard3}$(ii)$, 
it holds that $\{f'(w_0),f'(w_2)\}=\{0,2\}$ and $\{f'(w_1),f'(w_3)\}=\{1,3\}$; 
in particular, $f'$ restricted to $\{w_0,w_1,w_2,w_3\}$ is bijective into $B$.  
Let $u_1,\ldots,u_n$ be the variables of $\psi$.  We let the assignment 
$g \colon \{u_1,\ldots,u_n\} \to B^*$ be the unique mapping satisfying the following: 
for all $u \in \{u_1,\ldots,u_n\}$ and $i \in B^*$, 
$$g(u)=i \text{ if and only if } f'(u)=f'(w_i)\text{.}$$
note that such a unique $g$ exists by the properties of $f'$.

We check that $g$ witnesses $\mathbf{B}^* \models \psi$.  
For $i \in B^*$ and any variable $u$, let the atom $c_i \leq u$ be in $\psi$ 
(the argument is similar for an atom of the form $u \leq c_i$ in $\psi$).  
The atom $w_i \leq u$ is in $\psi'$ by construction, 
hence $f'(w_i) \leq^\mathbf{B} f'(u)$ by hypothesis.  
If $f'(w_i)=f'(u)$, then $g(u)=i$, and we are done since $c^{\mathbf{B}^*}_i=i \leq^{\mathbf{B}^*} g(u)$.  
If $f'(w_i) <^\mathbf{B} f'(u)$, then $i \in \{0,2\}$ 
and $f'(u)=f'(w_j)$ for some $j \in \{1,3\}$ by the properties of $f'$ 
and by inspection of $\mathbf{B}$.  It follows 
that $g(u) \in \{1,3\}$, and we are done since $c^{\mathbf{B}^*}_i \in \{0,2\}$ 
and $\{0,2\} \leq^{\mathbf{B}^*} \{1,3\}$.  

Consider variables $u$ and $u'$ such that the atom $u \leq u'$ is in $\psi$.  
By construction, the atom $u \leq u'$ is in $\psi'$, 
hence $f'(u) \leq^{\mathbf{B}} f'(u')$.  Let $i,j \in B$ 
such that $f'(w_i)=f'(u) \leq^{\mathbf{B}} f'(u')=f'(w_j)$; 
note that such $i$ and $j$ exist by the properties of $f'$.  
It follows that $g(u)=i$ and $g(u')=j$.  We now claim that 
$i \leq^{\mathbf{B}^*} j$.  Indeed, we have $f'(w_i) \leq^{\mathbf{B}} f'(w_j)$.  
If $f'(w_i)=f'(w_j)$, then $i=j$ by the properties of $f'$, and we are done since $g(u)=g(u')$.  
If $f'(w_i) <^{\mathbf{B}} f'(w_j)$, then $i \in \{0,2\}$ and $j \in \{1,3\}$, 
and as above, we are done since $g(u) \in \{0,2\}$, $g(u') \in \{1,3\}$, 
and $\{0,2\} \leq^{\mathbf{B}^*} \{1,3\}$.

$(ii) \Rightarrow (i)$ Let $g$ be any assignment witnessing $\mathbf{B}^* \models \psi$.  
Let $f$ be any assignment of $\{y_0,y_1,y_2,y_3\}$ in $B$.  If $f$ is trivial, 
then $\mathbf{B},f \models \exists x_0 \ldots \exists x_3\exists w_0 \ldots \exists w_3( \alpha \wedge \psi' )$ 
by Claim~\ref{cl:nphard2}.  Otherwise, assume that $f$ is nontrivial.  By Claim~\ref{cl:nphard3}$(i)$, 
let $f'$ be an extension of $f$ to the variables of $\alpha$ such that $\mathbf{B},f' \models \alpha$.  By Claim~\ref{cl:nphard3}$(ii)$, 
it holds that $\{f'(w_0),f'(w_2)\}=\{0,2\}$ and $\{f'(w_1),f'(w_3)\}=\{1,3\}$.  Let $f''$ 
extend $f'$ to the variables of $\psi'$ by putting, 
for all $u \in \{u_1,\ldots,u_n\}$ and $i \in B^*$:
$$f''(u)=f'(w_i) \text{ if and only if } g(u)=i \text{.}$$
It suffices to show that $\mathbf{B},f'' \models \psi'$.  

For $i \in B^*$ and $u$ a variable, $w_i \leq u$ be in $\psi'$ 
(atoms $c_i \leq u$ in $\psi'$ are similarly addressed).  
Then the atom $c_i \leq u$ is in $\psi$ by construction.  
Then $i \leq^{\mathbf{B}^*} g(u)$ by hypothesis.  
If $i=g(u)$, then $f''(u)=f'(w_i)=f''(w_i)$, and we are done.  
If $c^{\mathbf{B}^*}_i=i <^{\mathbf{B}^*} g(u)$, then $i \in \{0,2\}$ 
and $g(u)=j \in \{1,3\}$, that is, $f'(w_i) \in \{0,2\}$ and $f''(u)=f'(w_j) \in \{1,3\}$, 
from which $f''(w_i) \leq^{\mathbf{B}} f''(u)$ and we are done.

For $u$ and $u'$ variables, let the atom $u \leq u'$ be in $\psi'$.  
By construction, the atom $u \leq u'$ is in $\psi$, 
hence $i=g(u) \leq^{\mathbf{B}^*} g(u')=j$.  If $i=j$, 
then $f''(u)=f'(w_i)=f'(w_j)=f''(u')$, and we are done.  
If $i <^{\mathbf{B}^*} j$, then $i \in \{0,2\}$ and $j \in \{1,3\}$, 
then $f''(u)=f'(w_i) \in \{0,2\}$ and $f''(u')=f'(w_j) \in \{1,3\}$, 
from which $f''(u) \leq^{\mathbf{B}} f''(u')$ and we are done.
\shortversion{\qed}\end{proof}}

\begin{proof}
Let $\tau=\{\leq\}$ and $\sigma=\tau \cup \{c_0,c_1,c_2,c_3\}$ be vocabularies 
where $\leq$ is a binary relation symbol and $c_i$ is a constant symbol ($i \in B$).  
Let $\mathcal{FO}_{\sigma}(\exists,\wedge)$ contain first-order sentences built using only logical symbols in $\{\exists,\wedge\}$ 
and nonlogical symbols in $\sigma$; $\mathcal{FO}_{\tau}(\forall,\exists,\wedge)$ is described similarly.  
Let $\mathbf{B}^*$ be the $\sigma$-structure 
such that $B^*=B$, $(B^*,\leq^{\mathbf{B}^*})$ is isomorphic to $\mathbf{B}$ under the identity mapping, 
and $c_i^{\mathbf{B}^*}=i$ for all $i \in B$.  

By \cite[Theorem~2, Case $n=2$]{PrattTiuryn96}, 
the problem $\textsc{MC}(\{\mathbf{B}^*\},\mathcal{FO}_{\sigma}(\exists,\wedge))$ is $\textup{NP}$-hard.  
It is therefore sufficient to 
give a polynomial-time many-one reduction from $\textsc{MC}(\{\mathbf{B}^*\},\mathcal{FO}_{\sigma}(\exists,\wedge))$  
to $\textsc{MC}(\{\mathbf{B}\},\mathcal{FO}_{\tau}(\forall,\exists,\wedge))$.  
The idea of the reduction is to simulate the constants in $\sigma$ 
by universal quantification and additional variables; the details follow. 


Let $\psi$ be an instance of $\textsc{MC}(\{\mathbf{B}^*\},\mathcal{FO}_{\sigma}(\exists,\wedge))$, 
and let $\{x_i,y_i,w_i \mid i \in B\}$ be a set of $12$ fresh variables (not occurring in $\psi$).  
Let $\psi'$ be the $\mathcal{FO}_{\tau}(\exists,\wedge)$-sentence obtained from $\psi$ by replacing 
atoms of the form $c_i \leq u$ and $u \leq c_i$, respectively, 
by atoms of the form $w_i \leq u$ and $u \leq w_i$ (where $c_i$ is a constant in $\sigma$ and $u, w_i$ are variables).  
Let $\alpha$ be the conjunction of atoms defined by (see Figure~\ref{fig:exprhard})
$$\{w_0,w_2\} \leq \{w_1,w_3\} \wedge \bigwedge_{j \in \{0,2\}}\{x_j\} \leq \{y_j, w_j\} \wedge \bigwedge_{j \in \{1,3\}}\{y_j,w_j\} \leq \{x_j\}\text{,}$$
where, for sets of variables $S$ and $S'$, 
the notation $S \leq S'$ denotes the conjunction of atoms of the form $s \leq s'$ for all $(s,s') \in S \times S'$.  

\shortversion{\begin{SCfigure}[\sidecaptionrelwidth][t]
\input{exprhard.pspdftex}
\hfill 
\caption{The Hasse diagrams of the bowtie poset $\mathbf{B}$ (left) and of the representation $\mathbf{M}_\alpha$ of the formula $\alpha$ (right, 
see Section~\ref{sect:red} for the interpretation of $\mathbf{M}_\alpha$) used in Theorem~\ref{th:exprhard}. The idea of the reduction 
is to simulate the constant $c_i$ in $\psi \in \mathcal{FO}_{\sigma}(\exists,\wedge)$, interpreted on the element $i \in B$, 
by the variable $w_i$ in $\phi \in \mathcal{FO}_{\tau}(\forall,\exists,\wedge)$, where $i \in \{0,1,2,3\}$.}\label{fig:exprhard}
\end{SCfigure}}

\longversion{\begin{figure}
\input{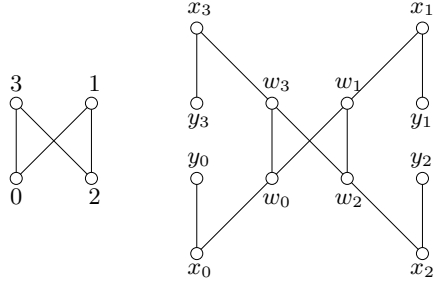}
\caption{The Hasse diagrams of the bowtie poset $\mathbf{B}$ (left) and of the representation $\mathbf{M}_\alpha$ of the formula $\alpha$ (right, 
see Section~\ref{sect:red} for the interpretation of $\mathbf{M}_\alpha$) used in Theorem~\ref{th:exprhard}. The idea of the reduction 
is to simulate the constant $c_i$ in $\psi \in \mathcal{FO}_{\sigma}(\exists,\wedge)$, interpreted on the element $i \in B$, 
by the variable $w_i$ in $\phi \in \mathcal{FO}_{\tau}(\forall,\exists,\wedge)$, where $i \in \{0,1,2,3\}$.}\label{fig:exprhard}
\end{figure}}

We finally define 
the $\mathcal{FO}_{\tau}(\forall,\exists,\wedge)$-sentence $\phi$ by putting 
\shortversion{$\phi=\forall y_0 \ldots \forall y_3\exists x_0 \ldots \exists x_3\exists w_0 \ldots \exists w_3( \alpha \wedge \psi' )$.}
\longversion{$$\phi=\forall y_0 \ldots \forall y_3\exists x_0 \ldots \exists x_3\exists w_0 \ldots \exists w_3( \alpha \wedge \psi' )\text{.}$$}
The reduction is clearly feasible in polynomial time; 
we now prove that the reduction is correct, that is, 
$\mathbf{B}^* \models \psi$ if and only if $\mathbf{B} \models \phi$.  

An assignment $f \colon \{y_0,y_1,y_2,y_3 \} \to B$ is said to be \emph{nontrivial} 
if $\{f(y_0),f(y_2)\}=\{0,2\}$ and $\{f(y_1),f(y_3)\}=\{1,3\}$, 
and \emph{trivial} otherwise; in particular, 
nontrivial assignments are bijective.


\longversion{
\begin{claimm}
\label{cl:nphard1}
Let $f$ be a trivial assignment.  There exists an assignment $f'$ extending $f$ 
to the variables of $\alpha$ such that $\mathbf{B},f' \models \alpha$ 
and $f'(w_0)=f'(w_2)=f'(w_1)=f'(w_3)$.
\end{claimm}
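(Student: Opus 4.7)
The plan is to exploit the symmetry of the bowtie under swapping minimals with maximals to reduce to the single case $\{f(y_0),f(y_2)\} \neq \{0,2\}$, and then to locate a single element $b \in B$ that can serve as the common value for all four $w$-variables. The key observation I would establish first is an existence lemma: whenever $\{f(y_0),f(y_2)\} \neq \{0,2\}$, there is some $b \in \{0,2\}$ lying below both $f(y_0)$ and $f(y_2)$ in $\mathbf{B}$. This is a short case analysis on where $f(y_0)$ and $f(y_2)$ land: take $b=f(y_0)$ if $f(y_0) \in \{0,2\}$, $b=f(y_2)$ if $f(y_2) \in \{0,2\}$, and $b=0$ if both land in $\{1,3\}$; the only configuration ruled out is precisely the one excluded by triviality, namely $\{f(y_0),f(y_2)\}=\{0,2\}$.

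With such a $b$ fixed, the desired extension is the natural one. I would set $f'(w_i)=b$ for every $i \in B$ (which immediately secures $f'(w_0)=f'(w_1)=f'(w_2)=f'(w_3)$ as required), set $f'(x_0)=f'(x_2)=b$, and for each $j \in \{1,3\}$ pick $f'(x_j)$ to be any common upper bound of $b$ and $f(y_j)$ in $\mathbf{B}$. Such an upper bound always exists because $b \in \{0,2\}$ and both $1$ and $3$ lie above every element of $\{0,2\}$, so one of $\{1,3\}$ witnesses the upper bound for any $f(y_j)$.

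The concluding step is an atom-by-atom verification that $\mathbf{B}, f' \models \alpha$. Atoms of the form $w_i \leq w_j$ from the $\{w_0,w_2\} \leq \{w_1,w_3\}$ block reduce to $b \leq^{\mathbf{B}} b$; the atoms $x_j \leq y_j$ and $x_j \leq w_j$ for $j \in \{0,2\}$ reduce, respectively, to $b \leq^{\mathbf{B}} f(y_j)$ (guaranteed by the choice of $b$) and $b \leq^{\mathbf{B}} b$; and the atoms $y_j \leq x_j$ and $w_j \leq x_j$ for $j \in \{1,3\}$ hold by the defining property of $f'(x_j)$ as a common upper bound of $f(y_j)$ and $b$.

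No step of this plan is genuinely hard: the only point that requires any real thought is the initial existence lemma for $b$, and that is a finite case distinction over the four-element bowtie. The minor bookkeeping to watch is that the symmetric case $\{f(y_1),f(y_3)\} \neq \{1,3\}$ really is dispatched by dualizing everything (minimals $\leftrightarrow$ maximals, common lower bound $\leftrightarrow$ common upper bound), which is justified by the order-reversing automorphism of $\mathbf{B}$ exchanging $\{0,2\}$ with $\{1,3\}$ together with the evident top/bottom symmetry of $\alpha$ between the index sets $\{0,2\}$ and $\{1,3\}$.
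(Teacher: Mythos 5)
Your proposal is correct and follows essentially the same route as the paper: reduce by symmetry to the case $\{f(y_0),f(y_2)\}\neq\{0,2\}$, find a common lower bound $b\in\{0,2\}$ of $f(y_0)$ and $f(y_2)$, assign it to all four $w$-variables and to $x_0,x_2$, and send $x_1,x_3$ to common upper bounds. The only cosmetic difference is that you package the case analysis into a single existence lemma for $b$, whereas the paper splits on whether $f(y_0)$ is minimal or maximal; the content is identical.
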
 
\longversion{\pfclnpharda}}

\longshort{\begin{claimm}}{\begin{claimm}[$\star$]}
\label{cl:nphard2}
\longversion{$\mathbf{B},f \models \exists x_0 \ldots \exists x_3\exists w_0 \ldots \exists w_3( \alpha \wedge \psi' )$ for all trivial assignments $f$.}
\shortversion{$\mathbf{B},f \models \exists x_0 \ldots x_3 w_0 \ldots w_3( \alpha \wedge \psi' )$ for all trivial assignments $f$.}
\end{claimm}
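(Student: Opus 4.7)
The plan is to bootstrap from Claim~\ref{cl:nphard1} (applied to the formula $\alpha$), which for any trivial assignment $f$ produces an extension $f'$ of $f$ to all variables of $\alpha$ such that $\mathbf{B},f' \models \alpha$ and, crucially, $f'(w_0)=f'(w_1)=f'(w_2)=f'(w_3)=b$ for some single element $b \in B$. This collapsing of the $w_i$'s to one point is precisely what will allow us to satisfy $\psi'$ essentially for free.

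Next I would analyze the syntactic shape of $\psi'$: by construction, $\psi'$ is obtained from the $\mathcal{FO}_{\sigma}(\exists,\wedge)$-sentence $\psi$ by replacing each constant symbol $c_i$ by the variable $w_i$. Up to its existential prefix, the matrix of $\psi'$ is therefore a conjunction of atoms of the form $u \leq u'$, $w_i \leq u$, or $u \leq w_i$, where $u, u'$ are variables not occurring in $\alpha$. I would then extend $f'$ to an assignment $f''$ defined on all variables of $\psi'$ by setting $f''(u) = b$ for every such variable $u$ not already in the domain of $f'$.

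To verify the claim it suffices to check that $\mathbf{B}, f'' \models \alpha \wedge \psi'$. The conjunct $\alpha$ is preserved because $f''$ extends $f'$ and $\mathbf{B}, f' \models \alpha$, and no new variable appearing in $\psi'$ occurs in $\alpha$. For $\psi'$, every atom in its matrix evaluates under $f''$ to $b \leq^{\mathbf{B}} b$, since $f''$ sends each $w_i$ and each new variable to the same point $b$; this holds by reflexivity of $\leq^{\mathbf{B}}$. Choosing the $x_i$ and $w_i$ as specified by $f'$, and the inner existentially quantified variables of $\psi'$ as specified by $f''$, therefore witnesses $\mathbf{B}, f \models \exists x_0 \ldots \exists x_3 \exists w_0 \ldots \exists w_3(\alpha \wedge \psi')$.

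There is no substantial obstacle: the entire content of the argument is encapsulated in the collapsing property provided by Claim~\ref{cl:nphard1}, after which reflexivity of $\leq^{\mathbf{B}}$ handles every remaining atom of $\psi'$ uniformly.
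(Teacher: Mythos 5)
Your proposal is correct and follows essentially the same route as the paper's proof: invoke Claim~\ref{cl:nphard1} to obtain an extension $f'$ satisfying $\alpha$ with all four $w_i$ collapsed to a single element $b$, then extend to $f''$ by sending every remaining variable of $\psi'$ to $b$, so that every atom of $\psi'$ reduces to $b \leq^{\mathbf{B}} b$ and holds by reflexivity. No gaps.
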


\longversion{\pfclnphardb}


\longversion{\begin{claimm}
\label{cl:nphard3}
Let $f$ be a nontrivial assignment.  The following statements hold.
\begin{enumerate}[label=\textit{(\roman*)}]
\item There exists an assignment $f'$ extending $f$ to the variables of $\alpha$ such that $\mathbf{B},f' \models \alpha$.
\item For all assignments $f'$ extending $f$ to the variables of $\alpha$ such that $\mathbf{B},f' \models \alpha$, 
it holds that $\{f'(w_0),f'(w_2)\}=\{0,2\}$ and $\{f'(w_1),f'(w_3)\}=\{1,3\}$.
\end{enumerate}
\end{claimm}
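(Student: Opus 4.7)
The plan is to treat the two statements in sequence, in both cases exploiting the structural features of $\mathbf{B}$: the elements $0,2$ are minimal, the elements $1,3$ are maximal, and (crucially for part (ii)) $\{0,2\}$ is precisely the set of elements below both $1$ and $3$ while $\{1,3\}$ is precisely the set of elements above both $0$ and $2$. Throughout, write $b_i = f(y_i)$, so that by nontriviality $\{b_0,b_2\}=\{0,2\}$ and $\{b_1,b_3\}=\{1,3\}$.

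For (i), I would exhibit the extension explicitly by setting $f'(x_i)=f'(w_i)=b_i$ for all $i\in B$, and then verify each conjunct of $\alpha$ by inspection. The atoms $w_0,w_2\leq w_1,w_3$ correspond to the four covers of $\mathbf{B}$; the atoms $x_j\leq y_j$ and $x_j\leq w_j$ for $j\in\{0,2\}$ (and dually $y_j\leq x_j$ and $w_j\leq x_j$ for $j\in\{1,3\}$) all hold by reflexivity of $\leq^{\mathbf{B}}$.

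For (ii), I would first pin down the $x_i$'s and then use them to pin down the $w_i$'s. For $j\in\{0,2\}$, the atom $x_j\leq y_j$ combined with the minimality of $b_j=f(y_j)$ in $\mathbf{B}$ forces $f'(x_j)=b_j$; symmetrically, for $j\in\{1,3\}$, the atom $y_j\leq x_j$ combined with maximality of $b_j$ forces $f'(x_j)=b_j$. Then for $i\in\{0,2\}$, the chains of atoms $x_i\leq w_i\leq w_1\leq x_1$ and $x_i\leq w_i\leq w_3\leq x_3$ together with transitivity give $b_i\leq^{\mathbf{B}} f'(w_i)\leq^{\mathbf{B}} 1$ and $b_i\leq^{\mathbf{B}} f'(w_i)\leq^{\mathbf{B}} 3$; since the only elements of $\mathbf{B}$ below both $1$ and $3$ are $0$ and $2$, and $b_i$ is minimal, we conclude $f'(w_i)=b_i$. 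The case $i\in\{1,3\}$ is handled symmetrically via the chains $x_0\leq w_0\leq w_i\leq x_i$ and $x_2\leq w_2\leq w_i\leq x_i$ using the fact that $\{1,3\}$ are the only elements above both $0$ and $2$.

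The only delicate step is the chain argument in (ii), where one must carefully combine the upper and lower bounds on $f'(w_i)$ obtained from two different chains in $\alpha$ and invoke the specific feature of the bowtie that $\{0,2\}$ (respectively $\{1,3\}$) is the unique pair below (respectively above) $\{1,3\}$ (respectively $\{0,2\}$). Once this is observed, both the existence (i) and the rigidity (ii) follow without further case analysis.
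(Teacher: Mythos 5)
Your proposal is correct and follows essentially the same route as the paper's proof: part (i) via the explicit extension $f'(x_i)=f'(w_i)=b_i$, and part (ii) by first forcing the $x_i$'s using the atoms $x_j\leq y_j$ (resp.\ $y_j\leq x_j$) together with minimality (resp.\ maximality) of $b_j$, and then forcing the $w_i$'s via the chains of atoms through $\alpha$ and the fact that $\{0,2\}$ is exactly the set of elements below both $1$ and $3$ (dually for $\{1,3\}$). The only cosmetic difference is that you invoke minimality of $b_i$ where the relevant point is that $f'(w_i)\in\{0,2\}$ is itself minimal, but since $\{0,2\}$ is an antichain of minimal elements this changes nothing.
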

\longversion{\pfclnphardc}}

\longshort{\begin{claimm}}{\begin{claimm}[$\star$]}
\label{cl:nphard4}
Let $f$ be a nontrivial assignment.  The following are equivalent.
\begin{enumerate}[label=\textit{(\roman*)}]
\item \longversion{$\mathbf{B},f \models \exists x_0 \ldots \exists x_3\exists w_0 \ldots \exists w_3( \alpha \wedge \psi' )$.}
\shortversion{$\mathbf{B},f \models \exists x_0 \ldots x_3 w_0 \ldots w_3( \alpha \wedge \psi' )$.}
\item $\mathbf{B}^* \models \psi$.
\end{enumerate}
\end{claimm}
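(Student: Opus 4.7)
My plan is to exploit the rigidity imposed by Claim~\ref{cl:nphard3}. The first observation I will establish is that, whenever $f$ is nontrivial and $f'$ extends $f$ to the variables of $\alpha$ with $\mathbf{B}, f' \models \alpha$, the map $\phi: B \to B$ defined by $\phi(i) = f'(w_i)$ is an \emph{automorphism} of $\mathbf{B}$. Indeed, by Claim~\ref{cl:nphard3}(ii), $\phi$ restricts to a bijection between $\{0,2\}$ and itself and between $\{1,3\}$ and itself; since $\leq^{\mathbf{B}}$ is exactly the reflexive closure of $\{0,2\} \times \{1,3\}$, any such bijection is an order-isomorphism. In particular, $\phi$ is uniquely determined by $f$, and inverses/compositions can be freely applied.

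For the direction $(i) \Rightarrow (ii)$, I will take an extension $f'$ of $f$ to the variables of $\alpha \wedge \psi'$ with $\mathbf{B}, f' \models \alpha \wedge \psi'$, let $\phi$ be the induced automorphism, and define $g(u) = \phi^{-1}(f'(u)) \in B^{*}$ for each variable $u$ of $\psi$. I will then verify $\mathbf{B}^{*} \models \psi$ via $g$ by an atom-by-atom check: an atom $c_i \leq u$ of $\psi$ is replaced in $\psi'$ by $w_i \leq u$, so the inequality $f'(w_i) \leq^{\mathbf{B}} f'(u)$, i.e.\ $\phi(i) \leq^{\mathbf{B}} \phi(g(u))$, transports through $\phi^{-1}$ to $c_i^{\mathbf{B}^{*}} = i \leq^{\mathbf{B}^{*}} g(u)$. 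Atoms of the form $u \leq c_i$ and $u \leq u'$ (the latter unchanged in $\psi'$) are handled symmetrically using the same automorphism argument.

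For the converse $(ii) \Rightarrow (i)$, I will start with a witness $g$ for $\mathbf{B}^{*} \models \psi$, use Claim~\ref{cl:nphard3}(i) to pick any extension $f'$ of $f$ to the variables of $\alpha$ with $\mathbf{B}, f' \models \alpha$, and let $\phi$ be the induced automorphism. I will then extend $f'$ to $f''$ on the remaining variables of $\psi'$ by setting $f''(u) = \phi(g(u)) = f'(w_{g(u)})$. A dual atom-by-atom verification---now using $\phi$ instead of $\phi^{-1}$---shows that $f''$ satisfies each atom of $\psi'$ in $\mathbf{B}$, so $f''$ witnesses $\mathbf{B}, f \models \exists x_0 \ldots \exists x_3 \exists w_0 \ldots \exists w_3(\alpha \wedge \psi')$.

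The main obstacle is really just identifying the right abstraction: once $\phi$ is recognized as an automorphism of $\mathbf{B}$, the two implications become mirror images of each other, and the syntactic substitution of $w_i$ for $c_i$ in forming $\psi'$ matches precisely the semantic relabeling via $\phi$. Beyond that, the work is routine case analysis over atom types and is essentially symmetric in the two directions.
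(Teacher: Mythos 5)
Your proof is correct and follows essentially the same route as the paper: both directions fix an extension $f'$ satisfying $\alpha$, invoke Claim~\ref{cl:nphard3} to see that $u\mapsto f'(w_u)$ is a bijection preserving the blocks $\{0,2\}$ and $\{1,3\}$, and translate witnessing assignments between $\psi$ and $\psi'$ through this correspondence (your $g=\phi^{-1}\circ f'$ and $f''=\phi\circ g$ are exactly the paper's definitions). Your only refinement is packaging the translation map as an automorphism of $\mathbf{B}$, which replaces the paper's equal-versus-strict case analysis on each atom with a single order-isomorphism argument.
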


\longversion{\pfclnphardd}

We conclude the proof by showing that $\mathbf{B}^* \models \psi$ if and only if $\mathbf{B} \models \phi$.  
If $\mathbf{B} \not\models \phi$, 
then there exists an assignment~$f$ such that 
$\mathbf{B},f \not\models \exists x_0 \ldots \exists x_3\exists w_0 \ldots \exists w_3( \alpha \wedge \psi' )$; 
by Claim~\ref{cl:nphard2}, $f$ is nontrivial.  Then $\mathbf{B}^* \not \models \psi$ by Claim~\ref{cl:nphard4}.
Conversely, if $\mathbf{B} \models \phi$, 
then in particular $\mathbf{B},f \models \exists x_0 \ldots \exists x_3\exists w_0 \ldots \exists w_3( \alpha \wedge \psi' )$ 
for all nontrivial assignments~$f$, and hence
$\mathbf{B}^* \models \psi$ by Claim~\ref{cl:nphard4}.  
\shortversion{\qed}
\end{proof}

\section{Reduced Forms}\label{sect:red}

In this section, we introduce \emph{reduced} forms for conjunctive positive 
sentences on posets and prove that, given a poset $\pp$ and a sentence $\phi$, 
a reduced form for $\phi$ is easy to compute and equivalent to $\phi$ on $\pp$.

\emph{In the rest of this section, $\sigma=\{\leq\}$ is the vocabulary of posets, 
and $\phi$ is a conjunctive positive $\sigma$-sentence as in {\em(\ref{eq:strictform})}.} 
Since $\phi$ will be evaluated on posets, where the formulas $x \leq y \wedge y \leq x$ and $x=y$ are equivalent, 
we assume that no atom of the form $x=y$ occurs in $\phi$; otherwise, 
such an atom can be replaced by the formula $x \leq y \wedge y \leq x$ maintaining logical equivalence. 

We represent $\phi$ by the pair $(\mathbf{Q}_\phi,\mathbf{M}_\phi)$, where $\mathbf{Q}_\phi=(Q_\phi,E^{\mathbf{Q}_\phi})$ and $\mathbf{M}_\phi=(M_\phi,E^{\mathbf{M}_\phi})$ are digraphs 
encoding the \emph{prefix} and the \emph{matrix} of $\phi$ respectively, as follows.  
The universes are $Q_\phi=M_\phi=\{x_1,y_1,\ldots,x_l,y_l\}$; we let $M^\forall_\phi=\{x_1,\ldots,x_l\}$ and $M^\exists_\phi=\{y_1,\ldots,y_l\}$ 
denote, respectively, the set of \emph{universal} and \emph{existential} variables in $\phi$.  The structure 
$\mathbf{Q}_\phi$ is a chain with cover relation 
$x_1 \prec^{\mathbf{Q}_\phi} y_1 \prec^{\mathbf{Q}_\phi} \cdots \prec^{\mathbf{Q}_\phi} x_l \prec^{\mathbf{Q}_\phi} y_l$.  
The structure $\mathbf{M}_\phi$ is defined by the 
edge relation $E^{\mathbf{M}_\phi}=\{(x,y) \mid \text{$x \leq y$ is an atom of $\phi$}\}$.  
We say that $\phi$ is in \emph{reduced form} if: 
\longshort{\begin{enumerate}[label=\textit{(\roman*)},leftmargin=0.8cm]
\item $\mathbf{M}_{\phi}$ is a poset;
\item the substructure of $\mathbf{M}_{\phi}$ induced by $M^{\forall}_{\phi}$ is an antichain; 
\item for all distinct $x$ and $x'$ in $M^\forall_\phi$, it holds that 
$[x)^{\mathbf{M}_{\phi}} \cap [x')^{\mathbf{M}_{\phi}}=(x]^{\mathbf{M}_{\phi}} \cap (x']^{\mathbf{M}_{\phi}}=\emptyset$; 
\item for all $x \in M^\forall_\phi$ and all $y \in M^\exists_\phi \cap ((x]^{\mathbf{M}_{\phi}} \cup [x)^{\mathbf{M}_{\phi}})$, 
it holds that $x<^{\mathbf{Q}_\phi} y$.
\end{enumerate}}
{\begin{enumerate}[label=\textit{(\roman*)},leftmargin=0.8cm]
\item $\mathbf{M}_{\phi}$ is a poset;
\item the substructure of $\mathbf{M}_{\phi}$ induced by $M^{\forall}_{\phi}$ is an antichain; 
\item for all distinct $x$ and $x'$ in $M^\forall_\phi$, it holds that 
$[x)^{\mathbf{M}_{\phi}} \cap [x')^{\mathbf{M}_{\phi}}=(x]^{\mathbf{M}_{\phi}} \cap (x']^{\mathbf{M}_{\phi}}=\emptyset$; 
\item for all $x \in M^\forall_\phi$ and all $y \in M^\exists_\phi \cap ((x]^{\mathbf{M}_{\phi}} \cup [x)^{\mathbf{M}_{\phi}})$, 
it holds that $x<^{\mathbf{Q}_\phi} y$.
\end{enumerate}
}

Let $\phi \in \fo(\forall,\exists,\wedge)$. For all $Z \subseteq M_\phi$, we let $\phi|_Z$ denote the conjunctive positive sentence 
represented by $(\mathbf{Q}_\phi|_Z,\mathbf{M}_\phi|_Z)$.  It is readily observed that, for all $Z \subseteq M_\phi$, 
it holds that $\phi \models \phi|_Z$.


\longshort{\begin{proposition}}{\begin{proposition}[$\star$]}
\label{prop:nf} 
Let $\mathcal{P}$ be a class of posets.  
There exists a polynomial-time algorithm that, 
given an instance $(\pp,\phi)$ of $\textsc{MC}(\mathcal{P},\fo(\forall,\exists,\wedge))$, 
either correctly rejects, 
or returns a sentence $\phi' \in \fo(\forall,\exists,\wedge)$ in reduced form 
such that $\pp \models \phi'$ if and only if $\pp \models \phi$.
\end{proposition}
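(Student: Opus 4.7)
The plan is to iteratively apply local rewritings to the encoded sentence $(\mathbf{Q}_\phi,\mathbf{M}_\phi)$; each step either preserves the truth of $\phi$ on $\pp$ or certifies $\pp\not\models\phi$, in which case the algorithm rejects. Correctness is argued throughout via the Hintikka game: each rewriting comes with an explicit translation between Eloise's winning strategies before and after, and each rejection with a direct impossibility argument for Eloise on $\pp$.

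First I would close $\mathbf{M}_\phi$ under reflexivity (adding $v \leq v$ for every variable $v$) and transitivity; both operations preserve logical equivalence on any poset, and together make $\mathbf{M}_\phi$ a preorder. I then compute its strongly connected components, each of which forces the variables of that component to be equal in every satisfying assignment. A nontrivial SCC $S$ is handled by cases: if $|S \cap M^\forall_\phi| \geq 2$, reject, since Abelard can set two universals of $S$ to distinct values in the nontrivial $\pp$; if $S$ contains a unique universal $x$ together with some existential $y$ satisfying $y <^{\mathbf{Q}_\phi} x$, reject, since Eloise would have to commit $y$ to equal Abelard's later choice of $x$; otherwise, merge $S$ into a single representative (the universal if any, else the existential latest in the prefix), substitute throughout the matrix, and delete the spent quantifiers. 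After this phase $\mathbf{M}_\phi$ is antisymmetric, establishing~(i); condition~(ii) is then immediate, since any $x \leq x'$ between distinct universals again forces rejection by nontriviality of $\pp$.

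For conditions~(iii) and~(iv) I would proceed by a finer game-theoretic analysis. A violation of~(iv) features a universal $x$ and an existential $y$ comparable in the matrix with $y <^{\mathbf{Q}_\phi} x$; Eloise must then commit $y$ before seeing $x$, so $y$ is forced to be an upper or lower bound of all of $P$. If $\pp$ has no such top or bottom, reject; otherwise, $y$ is pinned to the relevant extremal element through a short auxiliary gadget inserted into the prefix and matrix (a fresh universal placed before $y$, together with one new matrix atom), after which the earlier steps are rerun. A violation of~(iii) features distinct universals $x, x'$ with a common matrix neighbour $z$; by~(ii) and~(iv), $z$ must be existential and quantified after both $x, x'$, so a winning strategy for Eloise requires, for every Abelard-chosen pair, a common upper (or lower) bound in~$\pp$. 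This is a poset property checkable in polynomial time in $\|\pp\|$; its failure licenses rejection, while its success lets us split $z$'s matrix neighbourhood so that $z$ is attached to at most one universal in $\mathbf{M}_\phi$. Each non-rejecting step strictly reduces a polynomial progress measure (number of variables plus number of pairs witnessing a violation of~(i)--(iv)), so the procedure halts after polynomially many rounds. The main obstacle will be the~(iii) rewriting: justifying equivalence on $\pp$ demands a careful back-and-forth translation between Eloise strategies that pool several existential commitments into a single witness and strategies that split them, a translation that crucially exploits the pairwise-bound property that was just verified for $\pp$.
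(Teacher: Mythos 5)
Your first phase (reflexive--transitive closure, SCC collapse with the two rejection cases, and the clause~(ii) rejection via nontriviality of $\pp$) matches the paper's proof essentially step for step, and your rejection criteria for clauses~(iii) and~(iv) are also correct in substance: on a finite poset, your ``pairwise common upper bound'' property is equivalent to the existence of a top element, which is the check the paper actually performs. The genuine gaps are in the \emph{non-rejecting} branches of~(iii) and~(iv). For~(iv), pinning $y$ to the extremal element by inserting a fresh universal $u$ before $y$ together with an atom $u \leq y$ does not remove the syntactic violation: the offending atom between $y$ and the later universal $x$ is still present with $y <^{\mathbf{Q}_\phi} x$, so clause~(iv) still fails, and the new atom $u \leq y$ now sits alongside $x \leq y$, creating a fresh violation of clause~(iii). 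Rerunning the earlier steps does not repair either defect, so the procedure as described never reaches a reduced form. The paper's move is different and is the key idea you are missing: once $\pp$ is known to have a top element $t$, delete the \emph{entire upset} $[y)^{\mathbf{M}_{\phi^*}}$ from the variable set, and argue via the Hintikka game that Eloise can extend any winning strategy on the truncated sentence by sending every deleted variable to $t$ (every variable in $[y)$ is forced to equal $t$ anyway, so nothing is lost in the other direction).

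The clause~(iii) rewriting has the same problem in a worse form. Splitting $z$ into one copy per universal neighbour increases the number of variables, which already contradicts your claimed progress measure (``number of variables plus number of violating pairs'' strictly decreasing), and it does not remove the violation when the common upset of $x$ and $x'$ contains further variables above $z$: any $w$ with $z \leq^{\mathbf{M}_{\phi^*}} w$ inherits both split copies as lower bounds and becomes a new witness of $[x) \cap [x') \neq \emptyset$. You flag the back-and-forth strategy translation as ``the main obstacle'' but do not supply it, and it is precisely there that the argument must be made. Again the paper sidesteps all of this: having verified the top element exists, it deletes $[y)^{\mathbf{M}_{\phi^*}}$ wholesale, with the same one-line game-theoretic justification as in the~(iv) case (and dually with the bottom element for common \emph{lower} bounds). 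With that replacement your outline becomes a correct proof; as written, the~(iii) and~(iv) steps neither terminate in a reduced form nor come with a completed soundness argument.
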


\newcommand{\pfpropnf}[0]{
\begin{proof} 
The algorithm works as follows.  Let $\phi=(\mathbf{Q}_\phi,\mathbf{M}_\phi)$.  
Let $\phi^*=(\mathbf{Q}_{\phi^*},\mathbf{M}_{\phi^*})$ be such that 
$\mathbf{Q}_{\phi^*}=\mathbf{Q}_\phi$ and 
$\mathbf{M}_{\phi^*}$ is the reflexive transitive closure of $\mathbf{M}_\phi$.  Note that 
$\pp \models \phi$ if and only if $\pp \models \phi^*$, because $\leq^\pp$ is reflexive and transitive.  

The algorithm first fixes clause $(ii)$ in the definition of reduced form.  If $\mathbf{M}_{\phi^*}$ contains a directed edge $(x,x')$ between 
two distinct universal variables $x$ and $x'$, then the algorithm rejects; 
indeed, $\phi^*|_{\{x,x'\}} \equiv \forall x \forall x'(x \leq x')$ and 
$\phi^* \models \forall x \forall x'(x \leq x')$ by the observation before the statement, 
but $\pp \not\models \forall x \forall x'(x \leq x')$ because $\pp$ is nontrivial.  Note that if the algorithm does not terminate at this stage, 
then the substructure of $\mathbf{M}_{\phi^*}$ induced by $M^{\forall}_{\phi^*}$ is an antichain (we use this fact below).

Next, the algorithm fixes clause $(i)$.  As long as $\mathbf{M}_{\phi^*}$ contains directed cycles of length at least $2$, 
the algorithm detects one such cycle and either rejects, 
or reassigns $\mathbf{Q}_{\phi^*}$ and $\mathbf{M}_{\phi^*}$, as follows.  
Let $(z_1,z_2,\ldots,z_m)$ be a directed cycle in $\mathbf{M}_{\phi^*}$ ($m \geq 2$). Since clause $(ii)$ holds and $\phi^*$ is transitive, any such cycle can contain at most one universal variable.  
If the cycle contains exactly one universal variable, say $x$, there are two cases.  
If there exists an existential variable $y$ in the cycle such that $\exists y \forall x$ is a subsequence of $\mathbf{Q}_{\phi^*}$, 
then $\phi^*|_{\{x,y\}} \equiv \exists y \forall x(x=y)$ and $\phi^* \models \exists y \forall x(x=y)$, 
and again the algorithm rejects.  Otherwise, 
let $Z=M_{\phi^*} \setminus (  \{z_1,z_2,\ldots,z_m\} \setminus\{x\})$.  
The algorithm reassigns $\phi^* \leftrightharpoons \phi^*|_{Z}$.   
%
Note that $\pp \models (\mathbf{Q}_{\phi^*},\mathbf{M}_{\phi^*})$ 
if and only if $\pp \models (\mathbf{Q}_{\phi^*}|_{Z},\mathbf{M}_{\phi^*}|_{Z})$, 
using the fact that $\mathbf{M}_{\phi^*}$ is transitively closed; 
in fact the transitive closure warrants that if an atom $w \leq z_i$ was in $\phi^*$, then the atom $w \leq x$ is in $\phi^*|_Z$, 
and if an atom $z_i \leq w$ was in $\phi^*$, then the atom $x \leq w$ is in $\phi^*|_Z$. 
If the cycle contains only existential variables, 
and $y$ is the smallest such variable in $\mathbf{Q}_{\phi^*}$, 
then let $Z=M_{\phi^*} \setminus (  \{z_1,z_2,\ldots,z_m\} \setminus\{y\})$.  
The algorithm reassigns $\phi^* \leftrightharpoons \phi^*|_{Z}$; 
note that $\pp \models (\mathbf{Q}_{\phi^*},\mathbf{M}_{\phi^*})$ if and only if $\pp \models (\mathbf{Q}_{\phi^*}|_{Z},\mathbf{M}_{\phi^*}|_{Z})$.  
At loop termination, $\mathbf{M}_{\phi^*}$ is a poset (we use this fact below).  

Next, the algorithm detects and eliminates violations of clause $(iv)$.  
Let $y \in M^\exists_{\phi^*}$ and $x \in M^\forall_{\phi^*}$ 
be such that $y <^{\mathbf{Q}_{\phi^*}} x$ and (say) $x \leq^{\mathbf{M}_{\phi^*}} y$.  
Therefore, $\phi^*|_{\{y,x\}} \equiv \exists y \forall x(x \leq y)$ 
and $\phi^* \models \phi^*|_{\{y,x\}}$.  If $\pp$ lacks a top element, 
then it is readily checked that $\pp \not\models \phi^*|_{\{y,x\}}$, 
so that $\pp \not\models \phi^*$, and the algorithm rejects.  
Otherwise, if $\pp$ has a top element $t \in P$, 
then let $Z=M_{\phi^*} \setminus [y)^{\mathbf{M}_{\phi^*}}$.  
The algorithm reassigns $\phi^* \leftrightharpoons \phi^*|_{Z}$.  
We check that $\pp \models \phi^*$ if and only if $\pp \models \phi^*|_{Z}$.  
The forward direction holds.  For the backward direction, 
it is readily checked that a winning strategy for Eloise on $\pp$ and $\phi^*|_{Z}$ 
yields a winning strategy for Eloise on $\pp$ and $\phi^*$ 
by sending all variables in $[y)^{\mathbf{M}_{\phi^*}}$ identically to $t$ 
(independent of the play by Abelard).  

Now, the algorithm detects and eliminates violations of clause $(iii)$.  
Let $y \in M^\exists_{\phi^*}$ and $x,x' \in M^\forall_{\phi^*}$, $x \neq x'$, 
be such that (say) $x,x' \leq^{\mathbf{M}_{\phi^*}} y$. Since clause $(iv)$ holds, we have $x,x' <^{\mathbf{Q}_{\phi^*}} y$.
Therefore, $\phi^*|_{\{x,x',y\}} \equiv \forall x \forall x' \exists y(x \leq y \wedge x' \leq y)$ 
and $\phi^* \models \phi^*|_{\{x,x',y\}}$.  As above, if $\pp$ lacks a top element, 
then the algorithm rejects; otherwise, the algorithm reassigns 
$\phi^* \leftrightharpoons \phi^*|_{Z}$, where $Z=M_{\phi^*} \setminus [y)^{\mathbf{M}_{\phi^*}}$.  

Finally, the algorithm assigns $\phi' \leftrightharpoons \phi^*$ 
and returns $\phi'$.  The algorithm runs in polynomial time.  Moreover, 
if it decides the instance, the output is correct; 
and if it does not decide the instance, the returned sentence $\phi'$ is in reduced form, 
and such that $\pp \models \phi'$ if and only if $\pp \models \phi$. 
\shortversion{\qed}\end{proof}}

\longversion{\pfpropnf}

\newcommand{\digression}[0]{
In a slight digression, we observe that Proposition~\ref{prop:nf}, 
in combination with the statement below, allows us to prove that $\textsc{MC}(\{\pp\},\fo(\forall,\exists,\wedge))$ 
is polynomial-time tractable for every poset $\pp$ 
containing an element between all minimal and all maximal elements 
(for instance, posets with a top or bottom, and in particular semilattices).

\longshort{\begin{proposition}}{\begin{proposition}}
\label{prop:magic}
Let $\pp$ be a poset and let $\phi$ be a conjunctive positive sentence in reduced form.  
If there exists $p \in P$ such that $m \leq^\pp p \leq^\pp M$ for all $m \in \mathrm{min}(\pp)$ 
and $M \in \mathrm{max}(\pp)$, then $\pp \models \phi$.
\end{proposition}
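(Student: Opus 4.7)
The plan is to exhibit a winning strategy for Eloise in the Hintikka game on $\pp$ and $\phi$ that exploits the middle element $p$. The key structural observation is that the reduced-form clauses force every existential variable $y \in M^\exists_\phi$ to be comparable in $\mathbf{M}_\phi$ to at most one universal variable: two universal ancestors or two universal descendants of $y$ would violate clause $(iii)$, and one ancestor together with one descendant would, by transitivity (clause $(i)$), give two comparable universal variables, contradicting clause $(ii)$. Accordingly, I would classify each existential variable as being of type $(A)$ (no comparable universal in $\mathbf{M}_\phi$), type $(B)$ (unique universal ancestor $x_y <_{\mathbf{M}_\phi} y$), or type $(C)$ (unique universal descendant $x_y >_{\mathbf{M}_\phi} y$). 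Clause $(iv)$ guarantees that, in types $(B)$ and $(C)$, the variable $x_y$ precedes $y$ in the prefix $\mathbf{Q}_\phi$, so Eloise may react to Abelard's assignment $a_{x_y}$ of $x_y$ when choosing her move on $y$.

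Using the classification I would define Eloise's strategy on an existential variable $y$ as follows. In type $(A)$, play $y \mapsto p$. In type $(B)$, pick (arbitrarily, using finiteness of $\pp$) a maximal element $M \in \mathrm{max}(\pp)$ with $a_{x_y} \leq^\pp M$ and play $y \mapsto M$. In type $(C)$, dually pick a minimal element $m \in \mathrm{min}(\pp)$ with $m \leq^\pp a_{x_y}$ and play $y \mapsto m$.

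It then remains to verify that every atom $u \leq v$ in the matrix of $\phi$ is satisfied. Atoms with two universal endpoints reduce to $a \leq^\pp a$ by clause $(ii)$; mixed atoms follow directly from the construction and clause $(iv)$; and atoms with two existential endpoints split by a case analysis on the types of $u$ and $v$. Some combinations are impossible given the reduced form (e.g., a type-$(B)$ variable below a type-$(C)$ variable would yield a chain $x_u <_{\mathbf{M}_\phi} u \leq_{\mathbf{M}_\phi} v <_{\mathbf{M}_\phi} x_v$, violating clause $(ii)$); others force $u$ and $v$ to share the same linked universal variable, so they are assigned the same value; the remaining cases, where one endpoint lands on $p$ or a minimal element and the other on $p$ or a maximal element, are discharged by the defining property of $p$, which makes every minimal element $\leq^\pp p$ and $p \leq^\pp$ every maximal element. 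The main obstacle is this final case analysis across the six combinations of existential-atom types; no single case is difficult, but ruling out the impossible combinations requires using all four reduced-form clauses together.
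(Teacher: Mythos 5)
Your proposal is correct and follows essentially the same route as the paper's proof: the same three-way classification of existential variables (play $p$ for those incomparable to all universal variables, a maximal element above Abelard's value for those above a universal variable, a minimal element below it for those below one), with clauses $(i)$--$(iii)$ guaranteeing the classification is well defined and clause $(iv)$ guaranteeing Eloise can react in time. Your write-up in fact spells out the concluding case analysis that the paper leaves as ``easily checked.''
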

}

\longversion{\digression}

\newcommand{\pfpropmagic}[0]{
\begin{proof}
We describe a winning strategy for Eloise; to simplify the notation, 
we assume without loss of generality that $\phi$ is as in (\ref{eq:strictform}), 
hence such a strategy has the form $\mathbf{g}=(g_1,\ldots,g_l)$.  For all $q \in P$, let $m(q)$ denote an arbitrarily fixed element in $(q]^\pp \cap \mathrm{min}(\pp)$, 
and let $M(q)$ denote an arbitrarily fixed element in $[q)^\pp \cap \mathrm{max}(\pp)$.  

Let $p \in P$ be as in the statement of the lemma and let $i \in [l]$.  We distinguish three cases.  If $y_i$ is incomparable in $\mathbf{M}_{\phi}$ to all universal variables, 
then $g_i(q_1,\ldots,q_i)=p$ for all $q_1,\ldots,q_i \in P$.  If $y_i$ is above universal variable $x_j$ in $\mathbf{M}_{\phi}$, 
in symbols $x_j \leq^{\mathbf{M}_{\phi}} y_i$, 
then by clause $(iv)$ in the definition of reduced form it holds that $x_j <^{\mathbf{Q}_{\phi}} y_i$, 
and we let $g_i(q_1,\ldots,q_j,\ldots,q_i)=M(q_j)$ for all $q_1,\ldots,q_i \in P$.  Similarly, 
if $y_i \leq^{\mathbf{M}_{\phi}} x_j$, then $x_j <^{\mathbf{Q}_{\phi}} y_i$, 
and we let $g_i(q_1,\ldots,q_j,\ldots,q_i)=m(q_j)$ for all $q_1,\ldots,q_i \in P$.  
Since $\phi$ satisfies clauses $(i)$, $(ii)$, and $(iii)$ in the definition of reduced form, 
the case distinction is exhaustive, and the definition is sound and complete.  

It is easily checked that $\mathbf{g}$ is a winning strategy for Eloise, 
by a case distinction relying on the fact that $\phi$ is in reduced form.
\shortversion{\qed}\end{proof}}

\longversion{\pfpropmagic}

\longversion{
\begin{corollary}
Let $\mathcal{P}$ be any class of posets with top or bottom (for instance, 
any class of semilattices).  Then, 
$\textsc{MC}(\mathcal{P},\mathcal{FO}(\forall,\exists,\wedge))$ is polynomial-time tractable.
\end{corollary}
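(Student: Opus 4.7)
The plan is to obtain the corollary as an immediate composition of Proposition~\ref{prop:nf} and Proposition~\ref{prop:magic}. Given an instance $(\pp,\phi)$ of $\textsc{MC}(\mathcal{P},\fo(\forall,\exists,\wedge))$, I would first invoke the polynomial-time algorithm of Proposition~\ref{prop:nf}. If that algorithm rejects, we are done: its answer is already known to be correct. Otherwise it returns a sentence $\phi'$ in reduced form with $\pp \models \phi'$ iff $\pp \models \phi$, and the only remaining task is to decide $\pp \models \phi'$.

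The key observation is that the hypothesis of Proposition~\ref{prop:magic} is trivially met by any $\pp \in \mathcal{P}$. Indeed, if $\pp$ has a bottom $b$, then $\mathrm{min}(\pp)=\{b\}$ and by definition $b \leq^{\pp} M$ for every $M \in \mathrm{max}(\pp)$; thus $p:=b$ satisfies $m \leq^{\pp} p \leq^{\pp} M$ for all $m \in \mathrm{min}(\pp)$ and $M \in \mathrm{max}(\pp)$. The case of a top is dual, taking $p$ to be the top. (Note that membership in $\mathcal{P}$ can be verified by computing $\mathrm{min}(\pp)$ and $\mathrm{max}(\pp)$ in polynomial time and checking that one of them is a singleton, although for the algorithm we do not even need this since we may always try both branches.) Hence Proposition~\ref{prop:magic} applies and yields $\pp \models \phi'$, so the algorithm safely accepts.

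The overall procedure therefore runs in polynomial time: run the reduction of Proposition~\ref{prop:nf}; reject if it does, else accept. Correctness follows from the two propositions, and the running time is polynomial because Proposition~\ref{prop:nf} is. There is essentially no obstacle to overcome beyond noticing the trivial fit between the betweenness condition in Proposition~\ref{prop:magic} and the existence of a top or bottom; all the substantive work has already been done in the two propositions.
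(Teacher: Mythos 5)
Your proof is correct and follows essentially the same route as the paper: run the reduction of Proposition~\ref{prop:nf}, reject if it rejects, and otherwise accept because a top (or, dually, a bottom) element serves as the element $p$ with $m \leq^{\pp} p \leq^{\pp} M$ required by Proposition~\ref{prop:magic}. No gaps.
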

\begin{proof}
Let $(\pp,\phi)$ be an instance of $\textsc{MC}(\mathcal{P},\mathcal{FO}(\forall,\exists,\wedge))$.  
The algorithm first invokes the algorithm in Proposition~\ref{prop:nf}, 
which either decides correctly the instance, or returns a sentence $\phi'$ in reduced form such that 
$\pp \models \phi'$ if and only if $\pp \models \phi$; in the latter case, the algorithm accepts.  

The algorithm runs in polynomial time.  For correctness, 
if the algorithm rejects, then it rejects correctly by the correctness of the algorithm in Proposition~\ref{prop:nf}.  
If the algorithm accepts, we claim that $\pp \models \phi$.  Note that 
$\pp \in \mathcal{P}$ implies that $\pp$ has a top or bottom element; 
say that $\pp$ has a top element $t$.  Then, $m \leq^\pp t \leq^\pp M$ 
for all $m \in \mathrm{min}(\pp)$ and $M \in \mathrm{max}(\pp)=\{t\}$, 
and $\pp \models \phi'$ by Proposition~\ref{prop:magic}; 
the claim follows.  
\end{proof}
}


\section{Fixed-Parameter Tractability}\label{sect:cptract}

In this section, we prove that model checking conjunctive positive logic is fixed-parameter tractable 
parameterized by the size of the sentence \emph{and} the width of the poset; it follows, in particular, 
that model checking conjunctive positive logic is fixed-parameter tractable (parameterized by the size of the sentence) 
on classes of posets of bounded width.  We refer the reader to the introduction 
for an informal outline of the proof idea.

\emph{In the rest of this section, $\sigma=\{\leq\}$ is the vocabulary of posets, 
$\pp$ is a poset and $\phi=(\mathbf{Q}_\phi,\mathbf{M}_\phi)$ is a conjunctive 
positive $\sigma$-sentence as in {\em(\ref{eq:strictform})} satisfying clauses $(i)$ and $(ii)$ 
of the definition of reduced form.}  

\longversion{In the sequel we define the two notions of \lq\lq depth\rq\rq\ of a variable in the sentence $\phi$ 
(Section~\ref{subsect:depthsent}) and \lq\lq depth\rq\rq\ of an element in the poset $\pp$ 
(Section~\ref{subsect:depthposet}); we freely override the notation $\mathrm{depth}(\cdot)$, 
already used to measure the depth of a poset.  We then relate the two notions (Section~\ref{subsect:depthrest}), 
from which we obtain the tractability result (Section~\ref{subsect:tract}).}  

\subsection{Depth in the Sentence}\label{subsect:depthsent} 


Using the fact that $\phi$ is in reduced form, we define the following.  For all 
$y \in M^\exists_\phi$: 
\shortversion{$\ldpt(y)=\mathrm{depth}(\mathbf{M}_\phi|_{(y]^{\mathbf{M}_\phi}})$; 
$\udpt(y)=\mathrm{depth}(\mathbf{M}_\phi|_{[y)^{\mathbf{M}_\phi}})$.}
\longversion{\begin{itemize}
\item $\ldpt(y)=\mathrm{depth}(\mathbf{M}_\phi|_{(y]^{\mathbf{M}_\phi}})$;
\item $\udpt(y)=\mathrm{depth}(\mathbf{M}_\phi|_{[y)^{\mathbf{M}_\phi}})$.
\end{itemize}}
In words, 
$\ldpt(y)$ is the size of the largest chain in the substructure of 
$\mathbf{M}_\phi$ induced by the downset of $y$ in $\mathbf{M}_\phi$, 
and $\udpt(y)$ is the size of the largest chain in the substructure of 
$\mathbf{M}_\phi$ induced by the upset of $y$ in $\mathbf{M}_\phi$.

Next, we define a partition of $M^\exists_\phi$ into two blocks $L_\phi$ and $U_\phi$, 
the \emph{lower} and \emph{upper} variables respectively,  
as follows.  For all $y \in M^\exists_\phi$ let 
\shortversion{$y \in L_\phi$ if and only if there either exists $x \in M^\forall_\phi$ such that $y \leq^{\mathbf{M}_\phi} x$, 
or $y \parallel^{\mathbf{M}_\phi} x$ for all $x \in M^\forall_\phi$ and 
$\ldpt(y) \leq \udpt(y)$. Similarly, 
$y \in U_\phi$ if and only if there either exists $x \in M^\forall_\phi$ such that $y \geq^{\mathbf{M}_\phi} x$, 
or $y \parallel^{\mathbf{M}_\phi} x$ for all $x \in M^\forall_\phi$ and 
$\ldpt(y)>\udpt(y)$.}
\longversion{\begin{itemize}
\item $y \in L_\phi$ if and only if there either exists $x \in M^\forall_\phi$ such that $y \leq^{\mathbf{M}_\phi} x$, 
or $y \parallel^{\mathbf{M}_\phi} x$ for all $x \in M^\forall_\phi$ and 
$\ldpt(y) \leq \udpt(y)$; 
\item $y \in U_\phi$ if and only if there either exists $x \in M^\forall_\phi$ such that $y \geq^{\mathbf{M}_\phi} x$, 
or $y \parallel^{\mathbf{M}_\phi} x$ for all $x \in M^\forall_\phi$ and 
$\ldpt(y)>\udpt(y)$.
\end{itemize}}
In words, an existential variable $y$ in $\phi$ is lower if and only if 
it is below a universal variable in the matrix of $\phi$, 
or is incomparable to all universal variables in the matrix of $\phi$ 
but \lq\lq closer\rq\rq\ to the bottom of the matrix of $\phi$ in that $\ldpt(y) \leq \udpt(y)$; 
a similar idea drives the definition of upper variables.

Finally we define, for all $y \in M^\exists_\phi$: 
\shortversion{$\mathrm{depth}(y) = \ldpt(y)$ if $y \in L_\phi$, 
and $\mathrm{depth}(y) = \udpt(y)$ if $y \in U_\phi$;}
\longversion{\begin{align*}
\mathrm{depth}(y) &= 
\begin{cases}
\ldpt(y)\text{,} & \text{if $y \in L_\phi$,}\\
\udpt(y)\text{,} & \text{if $y \in U_\phi$;}
\end{cases}
\end{align*}}in words, the depth of a lower variable is its \lq\lq distance\rq\rq\ from the bottom as measured by $\ldpt(y)$, and similarly for upper variables.

\subsection{Depth in the Structure}\label{subsect:depthposet} 


Relative to the poset $\pp$, we define, for all $i \geq 0$, the set $P_i$ as follows.
\begin{itemize}
\item $L_0=\min(\pp)$, $U_0=\max(\pp) \setminus L_0$, and $P_0=L_0 \cup U_0$.  
\item Let $i \geq 1$,  
and let $R \subseteq P_{i-1}$ be such that $R \cap L_{i-1}$ is downward closed in $\pp|_{L_{i-1}}$ 
(that is, for all $l,l' \in L_{i-1}$, if $l \in R \cap L_{i-1}$ and $l' \leq^\pp l$, then $l' \in R$) 
and $R \cap U_{i-1}$ is upward closed in $\pp|_{U_{i-1}}$ 
(that is, for all $u,u' \in U_{i-1}$, if $u \in R \cap U_{i-1}$ and $u \leq^\pp u'$, then $u' \in R$).  
Let
$$P_{i-1,R}=\left\{ p \in P ~\left|~ \begin{array}{l}
\textup{for all $l \in L_{i-1}$, $l \leq^{\mathbf{P}} p$  if and only if $l \in R$,} \\ 
\textup{for all $u \in U_{i-1}$, $p \leq^{\mathbf{P}} u$  if and only if $u \in R$}
\end{array}\right. \right\}\text{;}$$
in words, $p \in P_{i-1,R}$ if and only if 
the elements in $L_{i-1}$ below $p$ are exactly those in $R \cap L_{i-1}$ 
(and the elements in $L_{i-1} \setminus R$ are incomparable to $p$) 
and the elements in $U_{i-1}$ above $p$ are exactly those in $R \cap U_{i-1}$ 
(and the elements in $U_{i-1} \setminus R$ are incomparable to $p$). 
We now define $P_i = L_i \cup U_i$ where $L_i$ and $U_i$ are as follows:
\shortversion{$$L_{i} = L_{i-1} \cup \bigcup_{R \subseteq P_{i-1}} \mathrm{min}^{\mathbf{P}}( P_{i-1,R})\text{, }
U_{i} = \big( U_{i-1} \cup \bigcup_{R \subseteq P_{i-1}} \mathrm{max}^{\mathbf{P}}( P_{i-1,R} )\big)\setminus L_i\text{.}$$}
\longversion{\begin{align*}
L_{i} &= L_{i-1} \cup \bigcup_{R \subseteq P_{i-1}} \mathrm{min}^{\mathbf{P}}( P_{i-1,R})\text{,}\\
U_{i} &= ( U_{i-1} \cup \bigcup_{R \subseteq P_{i-1}} \mathrm{max}^{\mathbf{P}}( P_{i-1,R} ))\setminus L_i\text{.}
\end{align*}}
\end{itemize}

Let $p \in P$.  Let $i \geq 0$ be minimum such that $p \in P_i$ 
(note that for every $p \in P$ such minimum $i$ exists, 
and $L_i \cap U_i=\emptyset$ by construction).  
\longshort{Then: 
\begin{itemize}
\item if $p \in L_i$, then $p \in L_{\mathbf{P}}$ and $\ldpt(p)=i$;
\item if $p \in U_i$, then $p \in U_{\mathbf{P}}$ and $\udpt(p)=i$.
\end{itemize}}{If $p \in L_i$, then $p \in L_{\mathbf{P}}$ and $\ldpt(p)=i$, and 
if $p \in U_i$, then $p \in U_{\mathbf{P}}$ and $\udpt(p)=i$.}
Note that $L_{\mathbf{P}}$ and $U_{\mathbf{P}}$ partition $P$ into two blocks containing the \emph{lower} and \emph{upper} elements respectively. 
Finally we define, for all $p \in P$: 
\shortversion{$\mathrm{depth}(p) = \ldpt(p)$, if $p \in L_{\mathbf{P}}$, 
and $\mathrm{depth}(p) = \udpt(p)$, if $p \in U_{\mathbf{P}}$.}
\longversion{\begin{align*}
\mathrm{depth}(p) &= 
\begin{cases}
\ldpt(p)\text{,} & \text{if $p \in L_{\mathbf{P}}$,}\\
\udpt(p)\text{,} & \text{if $p \in U_{\mathbf{P}}$.}
\end{cases}
\end{align*}}

\subsection{Depth Restricted Game}\label{subsect:depthrest}  

We now establish and formalize the relation between the depth in $\phi$ and the depth in $\pp$ (see Lemma~\ref{lemma:eloiserestr});
this is the key combinatorial fact underlying the model checking algorithm.

Relative to the Hintikka game on $\pp$ and $\phi$, we define the following.  
A pair $(y,p) \in M^\exists_\phi \times P$ is \emph{depth respecting} 
\shortversion{if $(y,p) \in (L_{\phi} \times L_{\mathbf{P}}) \cup (U_{\phi} \times U_{\mathbf{P}})$
and $\dpt(p) \leq \dpt(y)$.}  
\longversion{if $$(y,p) \in (L_{\phi} \times L_{\mathbf{P}}) \cup (U_{\phi} \times U_{\mathbf{P}})$$
and $$\dpt(p) \leq \dpt(y)\text{.}$$}  
A strategy $(g_1,\ldots,g_l)$ for Eloise 
is \emph{depth respecting} if, 
for all $i \in [l]$ and all plays $f \colon \{x_1,\ldots,x_l\} \to P$ by Abelard, 
the pair $(y_i,g_i(f(x_1),\ldots,f(x_i)))$ is depth respecting.

Let $b \geq 0$ be the maximum depth of a variable in $\phi$.  
A play $f \colon \{x_1,\ldots,x_l\} \to P$ by Abelard is \emph{bounded depth} 
if, for all $i \in [l]$, it holds that $f(x_i) \in P_{b+1}$.  

\begin{lemma}
\label{lemma:eloiserestr}
The following are equivalent (w.r.t.\   the Hintikka game on $\pp$ and $\phi$).
\begin{enumerate}[label=\textit{(\roman*)}]
\item Eloise has a winning strategy.
\item Eloise has a depth respecting 
winning strategy.
\item Eloise has a depth respecting strategy beating all bounded depth Abelard plays.
\end{enumerate}
\end{lemma}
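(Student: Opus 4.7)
The implications $(ii) \Rightarrow (i)$ and $(ii) \Rightarrow (iii)$ are immediate from the definitions: a depth respecting winning strategy is in particular a winning strategy, and it also beats every bounded depth Abelard play. To close the cycle of equivalences I plan to prove $(i) \Rightarrow (ii)$ by normalizing an arbitrary winning strategy into a depth respecting one, and $(iii) \Rightarrow (ii)$ by extending a depth respecting strategy against bounded depth plays to one against all plays.

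For $(i) \Rightarrow (ii)$, the plan is a round-by-round strategy modification, under the inductive invariant that after each modified response Eloise retains a winning continuation. Suppose Eloise must play $y_i$ and her original winning response is $c \in P$; assume $y_i \in L_\phi$ (the $U_\phi$ case is symmetric). By clause $(iv)$ of the reduced form, every universal variable $x_j$ comparable to $y_i$ in $\mathbf{M}_\phi$ has already been played, so the active constraints on Eloise's move are $f(z) \leq^\pp c$ for every already-played $z$ with $z \leq^{\mathbf{M}_\phi} y_i$, and $c \leq^\pp f(z)$ for every already-played $z$ with $y_i \leq^{\mathbf{M}_\phi} z$. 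I would replace $c$ by a depth respecting $c' \in L_{\mathbf{P}}$ as follows: identify the region $P_{k-1,R}$ containing $c$, where $R$ encodes $c$'s comparabilities with $L_{k-1} \cup U_{k-1}$, and select $c'$ as a minimal element of $P_{k-1,R}$ that still dominates the already-played witnesses below $y_i$. By the construction of $L_k$ as the union of $\mathrm{min}^{\mathbf{P}}(P_{k-1,R})$, such a $c'$ lies in $L_{\mathbf{P}}$ at depth at most $k$, and an induction matching iterations of $L_k$ with the length of chains in $(y_i]^{\mathbf{M}_\phi}$ forces $k \leq \ldpt(y_i)$. Winning is preserved because atoms $y_i \leq z$ only weaken under $c' \leq^\pp c$ (Eloise can adjust her later responses appropriately), while atoms $z \leq y_i$ with $z$ already played are satisfied by the choice of $c'$.

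For $(iii) \Rightarrow (ii)$, I plan a shadow construction. Inductively, given an Abelard play $f$, a bounded depth shadow $\tilde{f}(x_1), \ldots, \tilde{f}(x_{i-1})$, and Eloise's depth respecting responses under $\mathbf{g}$ on the shadow, I pick $\tilde{f}(x_i) \in P_{b+1}$ to exhibit the same comparability type as $f(x_i)$ with respect to the elements of $L_b \cup U_b$ relevant to the current game state; such a $\tilde{f}(x_i)$ is guaranteed to exist because $P_{b+1}$ is defined to include minima and maxima of every region $P_{b,R}$. Setting $g'_i(f(x_1), \ldots, f(x_i)) := g_i(\tilde{f}(x_1), \ldots, \tilde{f}(x_i))$ yields a depth respecting strategy against all plays, and this new strategy is winning because every atom of $\phi$ only tests comparability, and the shadow preserves the comparabilities between Abelard's elements and Eloise's depth respecting responses. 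The main obstacle I anticipate is in $(i) \Rightarrow (ii)$: carefully matching the iterative construction of $L_i$ to the chain structure beneath $y_i$ in $\mathbf{M}_\phi$ to establish the depth bound, and maintaining the winning continuation property under the round-by-round modification, requires a careful inductive invariant tracking which constraints remain in force at each modified response.
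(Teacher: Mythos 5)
Your overall architecture is the paper's: the two substantive directions are a normalization argument for $(i)\Rightarrow(ii)$ that pushes each existential witness to a minimal (resp.\ maximal) element of its region $P_{k-1,R}$, and a ``shadow play'' argument that replaces each Abelard move by an element of $P_{b+1}$ with the same comparabilities to $L_b\cup U_b$. Your $(iii)\Rightarrow(ii)$ is essentially the paper's $(iii)\Rightarrow(i)$ and is sound in outline (note only that the regions control $l\leq^{\pp}p$ for $l\in L_b$ and $p\leq^{\pp}u$ for $u\in U_b$, not the full comparability type; this suffices because the reduced form rules out atoms $x\leq y$ with $y\in L_\phi$ and $y\leq x$ with $y\in U_\phi$).

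The gap is in $(i)\Rightarrow(ii)$. For $y_i\in L_\phi$ your case analysis covers atoms $y_i\leq z$ (which indeed only get easier when the witness moves down) and atoms $z\leq y_i$ with $z$ \emph{already played}, but omits atoms $y_{j'}\leq y_i$ where $y_{j'}$ is an existential variable quantified \emph{after} $y_i$ yet below $y_i$ in $\mathbf{M}_\phi$. The reduced form does not forbid this: $\forall x_1\exists y_1\forall x_2\exists y_2\,(y_2\leq y_1\wedge y_1\leq x_1)$ is reduced, has $y_1,y_2\in L_\phi$, and contains the atom $y_2\leq y_1$ with $y_2$ played after $y_1$. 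For such an atom, lowering $y_1$'s witness from $c$ to $c'$ \emph{strengthens} the constraint on Eloise's future move, so the invariant ``Eloise retains a winning continuation'' is exactly what is in jeopardy at the intermediate stage, and your play-order processing does not establish it. The paper avoids this by repairing violations in order of \emph{depth}, not play order: it takes a witness $(f,y_j)$ of failure of depth-respect that is minimal in the sense that every existential variable of strictly smaller depth (in the same block) is already depth respecting under \emph{every} Abelard play. Then every matrix-predecessor $y_{j'}$ of $y_j$ --- whether played before or after $y_j$ --- already has its witness $p'$ in $L_{i-1}$; since $p'\leq^{\pp}p$ forces $p'\in R$, the definition of $P_{i-1,R}$ yields $p'\leq^{\pp}m$ for the new witness $m\in\mathrm{min}^{\pp}(P_{i-1,R})$ with no further adjustment. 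To repair your argument, either adopt this depth-ordered iteration, or perform the replacement globally (replace every witness by a fixed minimal/maximal element of its region simultaneously) and verify all atoms in one pass, using that $y_{j'}<^{\mathbf{M}_\phi}y_j$ within $L_\phi$ forces $\dpt(y_{j'})<\dpt(y_j)$, so the new witness of $y_{j'}$ lands in $L_{\dpt(y_j)-1}\cap R$ and hence below the new witness of $y_j$.
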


\newcommand{\pfcle}[0]{
\begin{proof}
Recall that $\phi$ is in reduced form, 
hence it does not contain atoms of the form $x \leq x'$ with $x \neq x'$ and $x,x' \in M_\phi^\forall$. 

Assume $(f'(x_1),\ldots,f'(x_j)) \neq (f(x_1),\ldots,f(x_j))$.  
Since $f'$ and $\mathbf{g}$ satisfy all atoms in $\phi$ by hypothesis, 
and the assignment of $y_j$ in $P$ induced by $f'$ and $\mathbf{g}'$ is equal to the assignment of $y_j$ in $P$ induced by $f'$ and $\mathbf{g}$, 
it follows that $f'$ and $\mathbf{g}'$ satisfy all atoms in $\phi$.

Assume that $(f'(x_1),\ldots,f'(x_j))=(f(x_1),\ldots,f(x_j))$.  
Observe that $f'$ and $\mathbf{g}'$ satisfy all atoms of the form $z \leq z'$ where $y_j$ does not occur, 
because $f'$ and $\mathbf{g}$ satisfy all such atoms by hypothesis, 
and the assignment of any variable distinct from $y_j$ in $P$ 
induced by $f'$ and $\mathbf{g}$ is equal to the assignment of any such variable in $P$ 
induced by $f'$ and $\mathbf{g}'$ by construction.  

Thus, suffices to check atoms in $\phi$ where $y_j$ occurs.  
Consider an atom of the form $y_j \leq z$.  If $z=y_j$ the atom is trivially satisfied.  
If $z \neq y_j$, as observed above the assignment of $z$ in $P$ is unchanged in passing from 
$f'$ and $\mathbf{g}$ to $f'$ and $\mathbf{g}'$; hence, 
the atom is satisfied under the assignment in $P$ induced by $f'$ and $\mathbf{g}'$, 
because $m \leq^\mathbf{P} p$.

Now consider an atom of the form $z \leq y_j$.  If $z=y_j$ the atom is trivially satisfied.  
Assume $z \neq y_j$.  Since $y_j \in L_\phi$ in the case under analysis, 
it holds that $z \in M^\exists_\phi$ by construction; say $z=y_{j'}$, $j' \in [l]$, $j' \neq j$, 
so that the atom under consideration is $y_{j'} \leq y_j$.  Let $p'=g_{j'}(f'(x_1),\ldots,f'(x_{j'}))$.  
Since the original strategy $\mathbf{g}$ beats $f'$, we have $p' \leq^{\pp} p$; 
we want to show that $p' \leq^{\pp} m$. 

Since the atom $y_{j'} \leq y_j$ is in $\phi$ and $\phi$ is in reduced form, by construction $\dpt(y_{j'})<\dpt(y_j)$, 
so that $y_{j'} \in L_\phi$.  By the choice of the minimal witnesses $f$ and $y_j$, 
it holds that $(y_{j'},g_{j'}(f'(x_1),\ldots,f'(x_{j'})))$ is depth respecting, 
that is, $p'=g_{j'}(f'(x_1),\ldots,f'(x_{j'})) \in L_{\mathbf{P}}$ 
and $\dpt(p') \leq \dpt(y_{j'}) \leq i-1$.  Hence, $p' \in L_{i-1}$ by construction, 
and since $p' \leq^{\pp} p$, we also have that $p' \in R$, where $R$ is the subset of $P_{i-1}$ defined above.  
Since also $m \in P_{i-1,R}$, we have that $l \leq^{\mathbf{P}} m$ for all $l \in R \cap L_{i-1}$; 
in particular, $p' \leq^{\mathbf{P}} m$, and we are done.
\shortversion{\qed}\end{proof}
}

\newcommand{\pfclf}[0]{
\begin{proof}
Let 
$f \colon \{x_1,\ldots,x_l\} \to P$ be any play by Abelard.  We have that 
the assignment in $P$ induced by $f'$ (defined as above relative to $f$) and $\mathbf{g}$ satisfies all atoms in $\phi$; 
we want to show that the assignment in $P$ induced by $f$ and~$\mathbf{g}'$ (defined as above relative to $f'$ and $\mathbf{g}$) satisfies all atoms in $\phi$. 
We enter a case distinction.  Note that, since the substructure of $\mathbf{M}_\phi$ induced by $M^\forall_\phi$ is an antichain, 
there are no atoms of the form $x \leq x'$ with $x \neq x'$ and $x,x' \in M^\forall_\phi$.

All atoms of the form $y \leq y'$ where $y,y' \in M^\exists_\phi$ are satisfied, 
because for all variables in $M^\exists_\phi$, their assignment in $P$ 
induced by $f$ and $\mathbf{g}'$ is equal to their assignment in $P$ 
induced by $f'$ and $\mathbf{g}$ (and the latter is satisfying by hypothesis).  

We conclude considering atoms of the form $x \leq y$ or $y \leq x$, 
where $x \in M^\forall_\phi$ and $y \in M^\exists_\phi$.  Say $x=x_i$ and $y=y_{j}$ for $i,j \in [l]$.  
Consider any atom $y_{j} \leq x_{i}$; the argument is symmetric for any atom $x_i \leq y_{j}$.  
We have that $y_{j} \in L_\phi$, 
and since $\mathbf{g}$ is depth respecting, 
$g_{j}(f'(x_{1}),\ldots,f'(x_{j})) \in L_{\dpt(y_{j})} \subseteq L_{b}$.  
By $(iii)$, we have $g_{j}(f'(x_{1}),\ldots,f'(x_{j})) \leq^{\pp} f'(x_{i})$, 
and $g_{j}(f'(x_{1}),\ldots,f'(x_{j}))=g'_{j}(f(x_1),\ldots,f(x_{j}))$ by definition.  
Then by construction we have $f'(x_i)=r_i$ such that, for all $l \in L_{b}$, it holds 
$l \leq^{\pp} r_i$ if and only if $l \leq^{\pp} p_i=f(x_i)$, and we are done. 
\shortversion{\qed}
\end{proof}}

\begin{proof}
$(ii) \Rightarrow (iii)$ is trivial.  We prove $(i) \Rightarrow (ii)$ and $(iii) \Rightarrow (i)$.  

$(i) \Rightarrow (ii)$: Let $\mathbf{g}=(g_1,\ldots,g_l)$ be a winning strategy for Eloise.  
Let the Abelard play $f \colon \{x_1,\ldots,x_l\} \to P$ and the existential variable $y_j \in M^\exists_\phi$ 
be a minimal witness that the above winning strategy for Eloise is not depth respecting, in the following sense: 
\shortversion{$(y_j,g_j(f(x_1),\ldots,f(x_j)))$ is not depth respecting, but 
for all $f' \colon \{x_1,\ldots,x_l\} \to P$ 
and all $y_{j'} \in M^\exists_\phi$ such that 
either $y_j,y_{j'} \in L_\phi$ and $\ldpt(y_{j'})<\ldpt(y_j)$, 
or $y_j,y_{j'} \in U_\phi$ and $\udpt(y_{j'})<\udpt(y_j)$, 
it holds that $(y_{j'},g_j(f'(x_1),\ldots,f'(x_j)))$ is depth respecting.}
\longversion{\begin{itemize}
\item $(y_j,g_j(f(x_1),\ldots,f(x_j)))$ is not depth respecting;
\item for all $f' \colon \{x_1,\ldots,x_l\} \to P$ 
and all $y_{j'} \in M^\exists_\phi$ such that 
either $y_j,y_{j'} \in L_\phi$ and $\ldpt(y_{j'})<\ldpt(y_j)$, 
or $y_j,y_{j'} \in U_\phi$ and $\udpt(y_{j'})<\udpt(y_j)$, 
it holds that $(y_{j'},g_j(f'(x_1),\ldots,f'(x_j)))$ is depth respecting.
\end{itemize}}

We define a strategy $\mathbf{g}'=(g_1,\ldots,g_{j-1},g'_j,g_{j+1},\ldots,g_l)$ for Eloise 
such that $g'_j$ restricted to $P^j \setminus \{(f(x_1),\ldots,f(x_j))\}$ is equal to $g_j$ 
(in other words, $g'_j$ differs from $g_j$ only in the move after $f \colon \{x_1,\ldots,x_l\} \to P$), 
and $(y_j,g'_j(f(x_1),\ldots,f(x_j)))$ is depth respecting.  
There are two cases to consider, depending on whether $y_j \in L_\phi$ or $y_j \in U_\phi$.  
We prove the statement in the former case; the argument is symmetric in the latter case.  

So, assume $y_j \in L_\phi$.  Let $g_j(f(x_1),\ldots,f(x_j))=p$ and $i=\dpt(y_j)$.    Let $R \subseteq P_{i-1}$ (with $R \cap L_{i-1}$ downward closed in $\pp|_{L_{i-1}}$ and 
$R \cap U_{i-1}$ upward closed in $\pp|_{U_{i-1}}$) 
be such that, for all $l \in L_{i-1}$ and $u \in U_{i-1}$, it holds that 
$l \leq^{\mathbf{P}} p$ if and only if $l \in R$ and $p \leq^{\mathbf{P}} u$ if and only if $u \in R$.  Hence $p \in P_{i-1,R}$.  
Then there exists $m \in \mathrm{min}^{\mathbf{P}}( P_{i-1,R})$ 
such that $m \leq^{\mathbf{P}} p$. By construction we have $\dpt(m)=i$.  Let $g'_j \colon P^j \to P$ 
be exactly as $g_j$ with the exception that $g'_j(f(x_1),\ldots,f(x_j))=m$; 
note that the pair $(y_j,m)$ is depth respecting.  

\longshort{\begin{claimm}}{\begin{claimm}[$\star$]}
\label{cl:erestra}
Let $f'$ be any play by Abelard. 
Then $\mathbf{g}'=(g_1,\ldots,g'_j,\ldots,g_l)$ beats $f'$ in the Hintikka game on $\pp$ and $\phi$.
\end{claimm}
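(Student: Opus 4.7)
My plan is to fix an arbitrary Abelard play $f'$ and verify that under $\mathbf{g}'$ every conjunct of the matrix of $\phi$ is satisfied. I would first split on whether $(f'(x_1),\ldots,f'(x_j))$ equals $(f(x_1),\ldots,f(x_j))$ or not. In the negative case, $g'_j$ coincides with $g_j$ at the input $(f'(x_1),\ldots,f'(x_j))$ by construction (the two differ only at $(f(x_1),\ldots,f(x_j))$), so the play induced by $\mathbf{g}'$ and $f'$ coincides with the one induced by $\mathbf{g}$ and $f'$, and the conclusion is immediate from the fact that $\mathbf{g}$ is winning.

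In the positive case, the assignments induced by $\mathbf{g}'$ and by $\mathbf{g}$ against $f'$ agree on every variable except $y_j$, since each $g_k$ depends only on Abelard's moves; and at $y_j$ the value changes from $p$ to $m$. All atoms of $\phi$ not involving $y_j$ therefore remain satisfied, and I am reduced to checking atoms $y_j \leq z$ and $z \leq y_j$. For an atom $y_j \leq z$ with $z \neq y_j$, let $q$ denote the common value assigned to $z$ by both strategies; the original play satisfies the atom, so $p \leq^{\pp} q$, and since $m \leq^{\pp} p$ by the construction of $m$, transitivity yields $m \leq^{\pp} q$ as required.

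The substantive case is an atom $z \leq y_j$ with $z \neq y_j$. I would first observe that $z$ must be existential: a universal $z$ with $z \leq^{\mathbf{M}_\phi} y_j$ would place $y_j$ above a universal variable in $\mathbf{M}_\phi$, contradicting $y_j \in L_\phi$ via the partition of $M^\exists_\phi$ into $L_\phi$ and $U_\phi$ (a partition that holds because the reduced form, notably clause (ii), rules out any simultaneous membership). Writing $z = y_{j'}$, the atom $y_{j'} \leq y_j$ and the reduced form yield $\ldpt(y_{j'}) < \ldpt(y_j)$; a short case check then places $y_{j'}$ in $L_\phi$, so $\dpt(y_{j'}) < \dpt(y_j) = i$.

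The main obstacle, and the whole point of the minimality of the witness $(f, y_j)$, is the last step. By that minimality applied to the play $f'$ and the variable $y_{j'}$, the value $p' := g_{j'}(f'(x_1),\ldots,f'(x_{j'}))$ is depth respecting, so $p' \in L_{i-1}$. Because $\mathbf{g}$ beats $f'$ and the $y_j$-assignment of $\mathbf{g}$ against $f'$ equals $p$ (from the case assumption $(f'(x_1),\ldots,f'(x_j)) = (f(x_1),\ldots,f(x_j))$), the atom $y_{j'} \leq y_j$ gives $p' \leq^{\pp} p$. Combining $p \in P_{i-1,R}$ with the defining condition of $P_{i-1,R}$ forces $p' \in R$, and combining $m \in P_{i-1,R}$ with the same condition then delivers $p' \leq^{\pp} m$, which is exactly the content of the atom $y_{j'}\leq y_j$ under $\mathbf{g}'$.
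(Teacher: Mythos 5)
Your proposal is correct and follows essentially the same route as the paper's proof: the same case split on whether $(f'(x_1),\ldots,f'(x_j))$ equals $(f(x_1),\ldots,f(x_j))$, the same reduction to atoms involving $y_j$, transitivity through $m \leq^{\pp} p$ for atoms $y_j \leq z$, and for atoms $y_{j'} \leq y_j$ the same use of minimality of the witness to place $p'$ in $L_{i-1}$ and then route $p' \leq^{\pp} m$ through membership in $R$ and the defining condition of $P_{i-1,R}$. No gaps; your explicit justification that $z$ must be existential and that $y_{j'} \in L_\phi$ is, if anything, slightly more detailed than the paper's.
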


\longversion{\pfcle}

We obtain a depth respecting winning strategy for Eloise by iterating the above argument thanks to Claim 5. 


$(iii) \Rightarrow (i)$: Let $b \geq 0$ be the maximum depth of a variable in $\phi$, 
and let $\mathbf{g}=(g_1,\ldots,g_l)$ be a depth respecting strategy for Eloise 
beating all bounded depth plays by Abelard.  We define a strategy $\mathbf{g}'=(g'_1,\ldots,g'_l)$ for Eloise, as follows. 

Let $f \colon \{x_1,\ldots,x_l\} \to P$ be a play by Abelard, say $f(x_i)=p_i$ for all $i \in [l]$. Let $i \in [l]$ and let $R_i \subseteq P_{b}$ 
(with $R_i \cap L_{b-1}$ downward closed in $\pp|_{L_{b-1}}$ and 
$R_i \cap U_{b-1}$ upward closed in $\pp|_{U_{b-1}}$)
be such that 
for all $l \in L_{b}$, it holds that $l \leq^{\mathbf{P}} p_i$ if and only if $l \in R_i$ 
and for all $u \in U_{b}$, it holds that $p_i \leq^{\mathbf{P}} u$ if and only if $u \in R_i$.  By construction, 
there exists $r_i \in P_{b+1}$ such that for all $l \in L_{b}$, it holds that $l \leq^{\mathbf{P}} r_i$ if and only if 
$l \leq^{\mathbf{P}} p_i$ and for all $u \in U_{b}$, it holds that $r_i \leq^{\mathbf{P}} u$ if and only if $p_i \leq^{\mathbf{P}} u$.
Let $f' \colon \{x_1,\ldots,x_l\} \to P$ be the bounded depth play by Abelard 
\shortversion{defined by $f'(x_i)=r_i$ 
for all $i \in [l]$.  Finally define, for all $i \in [l]$, 
$g'_i(f(x_1),\ldots,f(x_i))=g_i(f'(x_1),\ldots,f'(x_i))$.}
\longversion{defined by $$f'(x_i)=r_i$$
for all $i \in [l]$.  Finally define, for all $i \in [l]$,
$$g'_i(f(x_1),\ldots,f(x_i))=g_i(f'(x_1),\ldots,f'(x_i))\text{.}$$}

\longshort{\begin{claimm}}{\begin{claimm}[$\star$]}
\label{cl:erestrb}
$\mathbf{g}'=(g'_1,\ldots,g'_l)$ is a winning strategy for Eloise. 
\end{claimm}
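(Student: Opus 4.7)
The plan is to fix an arbitrary Abelard play $f$ with $f(x_i)=p_i$, form the associated bounded-depth play $f'$ with $f'(x_i)=r_i$ as in the construction of $\mathbf{g}'$, and show that the assignment induced by $f$ together with $\mathbf{g}'$ satisfies every atom of $\phi$. Since $f'$ is bounded depth and $\mathbf{g}$ beats all bounded-depth plays, the assignment induced by $f'$ and $\mathbf{g}$ already satisfies every atom of $\phi$; the key observation is that the existential values produced under $\mathbf{g}'$ coincide with those produced under $\mathbf{g}$, so only the universal entries change when passing from $(f',\mathbf{g})$ to $(f,\mathbf{g}')$. Concretely, $g'_j(f(x_1),\ldots,f(x_j))=g_j(f'(x_1),\ldots,f'(x_j))$ by definition.

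I would then split into cases by atom type. First, since $\phi$ is in reduced form, clause $(ii)$ guarantees that the only atoms between two universal variables are trivial reflexive atoms of the form $x\leq x$. Second, every atom of the form $y\leq y'$ with $y,y'\in M^\exists_\phi$ is satisfied automatically, because both sides take the same values under $(f,\mathbf{g}')$ as under $(f',\mathbf{g})$. The main work is therefore the mixed atoms $y_j\leq x_i$ and $x_i\leq y_j$; by symmetry I would handle $y_j\leq x_i$.

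For such an atom, clauses $(iii)$ and $(iv)$ of the reduced form imply $y_j\in L_\phi$ and $x_i<^{\mathbf{Q}_\phi} y_j$. Since $\mathbf{g}$ is depth respecting, the value $q:=g_j(f'(x_1),\ldots,f'(x_j))=g'_j(f(x_1),\ldots,f(x_j))$ lies in $L_{\dpt(y_j)}\subseteq L_b$. Since $\mathbf{g}$ beats the bounded-depth play $f'$, we have $q\leq^{\mathbf{P}} r_i = f'(x_i)$. Now invoke the defining property of $r_i$: every $l\in L_b$ satisfies $l\leq^{\mathbf{P}} r_i$ if and only if $l\leq^{\mathbf{P}} p_i$. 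Applied to $l=q$, this yields $q\leq^{\mathbf{P}} p_i=f(x_i)$, which is exactly what is needed.

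I expect the main obstacle to be bookkeeping rather than a deep idea: one must be careful that the signature-preserving element $r_i$ exists in $P_{b+1}$ for every $p_i\in P$ (this is exactly how $P_{b+1}$ was constructed via the sets $P_{b,R}$ and their minima/maxima), and that the ``lower'' value produced by a depth-respecting strategy is in fact comparable to the right level $L_b$ rather than some deeper level. The case $x_i\leq y_j$ is handled dually using $U_b$ and the upper half of the construction, and together these finish the verification that $\mathbf{g}'$ beats $f$; since $f$ was arbitrary, $\mathbf{g}'$ is a winning strategy, completing $(iii)\Rightarrow(i)$.
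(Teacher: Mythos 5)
Your proposal is correct and follows essentially the same route as the paper's proof: the same case distinction on atom types (universal--universal atoms ruled out by reducedness, existential--existential atoms unchanged when passing from $(f',\mathbf{g})$ to $(f,\mathbf{g}')$, and the mixed atoms handled via the depth-respecting property $q\in L_{\dpt(y_j)}\subseteq L_b$ together with the signature-preserving property of $r_i$ over $L_b$ and $U_b$). The only nitpick is that $y_j\in L_\phi$ follows directly from the definition of $L_\phi$ (since $y_j\leq^{\mathbf{M}_\phi}x_i$ with $x_i$ universal) rather than from clauses $(iii)$ and $(iv)$ of the reduced form, but this does not affect the argument.
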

\longversion{\pfclf}
This concludes the proof of the lemma. \shortversion{\qed}
\end{proof}

\subsection{Fixed-Parameter Tractability}\label{subsect:tract} 

The following two lemmas allow to establish 
the correctness (Lemma~\ref{lemma:relativization}, relying on Lemma~\ref{lemma:eloiserestr}) 
and the tractability (Lemma~\ref{lemma:boundedsearch}) of the presented model checking algorithm, 
respectively.

\longshort{\begin{lemma}}{\begin{lemma}[$\star$]}
\label{lemma:relativization}
Let $b \geq 0$ be the maximum depth of a variable in $\phi$.  
Let $D=P_{b+1}$ and, for all $i \in [l]$, let 
\begin{align*}
D_i &= 
\begin{cases}
L_{\dpt(y_i)}\text{,} & \text{if $y_i \in L_\phi$,}\\
U_{\dpt(y_i)}\text{,} & \text{if $y_i \in U_\phi$.}
\end{cases}
\end{align*}
Then, 
\shortversion{$\pp \models \phi$ if and only if $\pp \models (\forall x_1 \in D)(\exists y_1 \in D_1)\ldots(\forall x_l \in D)(\exists y_l \in D_l)C$.}
\longversion{$\pp \models \phi$ if and only if $$\pp \models (\forall x_1 \in D)(\exists y_1 \in D_1)\ldots(\forall x_l \in D)(\exists y_l \in D_l)C(x_1,y_1,\ldots,x_l,y_l)\text{.}$$}
\end{lemma}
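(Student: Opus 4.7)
The plan is to derive this lemma as essentially a reformulation of Lemma~\ref{lemma:eloiserestr}((i)$\Leftrightarrow$(iii)) via the Hintikka-game characterization of relativization recalled in Section~\ref{sect:prelim} after (\ref{eq:relatstrictform}). First I would unpack the right-hand side: $\pp$ satisfies the relativized sentence if and only if Eloise has a strategy $(h_1,\ldots,h_l)$ with $h_i \colon D \times \cdots \times D \to D_i$ that beats every Abelard play $f$ such that $f(x_j) \in D$ for all $j \in [l]$. Because $D = P_{b+1}$, these are exactly the bounded-depth Abelard plays defined in Section~\ref{subsect:depthrest}.

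Next I would match the codomain constraint $h_i(\ldots) \in D_i$ with the depth-respecting condition. Unfolding the definitions, a pair $(y_i,p) \in M^\exists_\phi \times P$ is depth respecting precisely when $p \in D_i$: if $y_i \in L_\phi$, both conditions are equivalent to $p \in L_\pp$ together with $\ldpt(p) \leq \dpt(y_i)$, that is, $p \in L_{\dpt(y_i)} = D_i$, and the case $y_i \in U_\phi$ is symmetric by duality. So a strategy lies in $\prod_i D_i$ on its full domain if and only if it is depth respecting in the sense of Section~\ref{subsect:depthrest}.

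Combining these two observations reduces the lemma to Lemma~\ref{lemma:eloiserestr}((i)$\Leftrightarrow$(iii)). For the forward direction, $\pp \models \phi$ yields by that lemma a depth-respecting strategy $\mathbf{g}$ beating all bounded-depth Abelard plays; restricting each $g_i$ to $D^i$ produces maps into $D_i$ (by the matching above) that still beat every bounded-depth play, witnessing the relativized sentence. For the backward direction, given a witness $(h_1,\ldots,h_l)$ of the relativized sentence, I would extend each $h_i$ to $g_i \colon P^i \to P$ by sending every tuple outside $D^i$ to an arbitrarily fixed element of $D_i$; by the matching above, $g_i$ is depth respecting on all of $P^i$, and since every bounded-depth play $f$ satisfies $(f(x_1),\ldots,f(x_i)) \in D^i$, where $g_i$ agrees with $h_i$, the extended strategy $\mathbf{g}$ beats every bounded-depth Abelard play, so Lemma~\ref{lemma:eloiserestr} delivers $\pp \models \phi$.

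The only real bookkeeping point is the edge case $D_i = \emptyset$ for some $i$: then both the relativized sentence (which requires $\exists y_i \in D_i$) and any depth-respecting strategy (which requires a value in $D_i$) are impossible, so Lemma~\ref{lemma:eloiserestr} still gives $\pp \not\models \phi$ and the equivalence holds vacuously. Otherwise the extension in the backward step is unambiguous and the whole argument is a dictionary between the relativized quantifiers $(\forall x_i \in D)(\exists y_i \in D_i)$ on one side and the depth-respecting/bounded-depth vocabulary of Lemma~\ref{lemma:eloiserestr} on the other; no genuinely hard step is involved beyond the work already contained in that lemma.
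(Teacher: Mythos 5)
Your proposal is exactly the paper's proof: the paper likewise unpacks the relativized sentence via the Hintikka-game characterization from Section~\ref{sect:prelim}, identifies ``strategy of the form $g_i\colon D^i\to D_i$ beating all plays inside $D$'' with ``depth respecting strategy beating all bounded depth plays,'' and concludes by the equivalence $(i)\Leftrightarrow(iii)$ of Lemma~\ref{lemma:eloiserestr}. You supply slightly more bookkeeping than the paper does (the explicit extension of the $h_i$ from $D^i$ to $P^i$ and the $D_i=\emptyset$ edge case), but the argument is the same.
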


\newcommand{\pflemrela}[0]{
\begin{proof}
We know that $\pp \models \phi$ if and only if Eloise has a winning strategy in the Hintikka game on $\pp$ and $\phi$, 
as per Item $(i)$ in Lemma~\ref{lemma:eloiserestr}.

We also observed in Section~\ref{sect:prelim} that $\pp \models (\forall x_1 \in D)(\exists y_1 \in D_1)\ldots(\forall x_l \in D)(\exists y_l \in D_l)C(x_1,y_1,\ldots,x_l,y_l)$ 
if and only if, in the Hintikka game on $\rela$ and $\phi$, 
Eloise has a strategy of the form $g_i \colon D^i \to D_i$ for all $i \in [l]$, 
beating all plays $f$ by Abelard such that $f(x_i) \in D$ for all $i \in [l]$; 
in other words, if and only if, in the Hintikka game on $\rela$ and $\phi$, 
Eloise has a depth respecting strategy beating all bounded depth plays by Abelard, 
as per Item $(iii)$ in Lemma~\ref{lemma:eloiserestr}.

Since Item $(i)$ and Item $(iii)$ are equivalent by Lemma~\ref{lemma:eloiserestr}, 
the statement follows.
\shortversion{\qed}\end{proof}}

\longversion{\pflemrela}


\longshort{\begin{lemma}}{\begin{lemma}[$\star$]}
\label{lemma:boundedsearch}
Let $w=\mathrm{width}(\pp)$ and let $k \geq 0$.  
Then, $|P_{k}| \leq 2w^{(3w)^k}$.
\end{lemma}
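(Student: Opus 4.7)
The plan is to proceed by induction on $k$. For the base case $k=0$, observe that $L_0=\mathrm{min}(\pp)$ and $U_0\subseteq\mathrm{max}(\pp)$ are antichains in $\pp$, so $|L_0|,|U_0|\leq w$ and $|P_0|\leq 2w=2w^{(3w)^0}$.

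For the inductive step, assuming $|P_{i-1}|\leq 2w^{(3w)^{i-1}}$, I would use the recurrence
\[|P_i|\leq |P_{i-1}|+2w\cdot N_i,\]
where $N_i$ is the number of valid subsets $R\subseteq P_{i-1}$; the factor $2w$ arises because each of $\mathrm{min}^{\pp}(P_{i-1,R})$ and $\mathrm{max}^{\pp}(P_{i-1,R})$ is an antichain of size at most $w$. To bound $N_i$, note that a valid $R$ factors uniquely as an order ideal of $\pp|_{L_{i-1}}$ together with an order filter of $\pp|_{U_{i-1}}$; both subposets inherit width at most $w$, so by Dilworth's theorem each decomposes into at most $w$ chains, and an order ideal is determined by its prefix in each chain. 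By AM-GM this gives at most $(|L_{i-1}|/w+1)^w$ ideals and $(|U_{i-1}|/w+1)^w$ filters, and a second AM-GM step yields $N_i\leq (|P_{i-1}|/(2w)+1)^{2w}$.

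Substituting $|P_{i-1}|\leq 2w^a$ with $a=(3w)^{i-1}$ gives $N_i\leq (w^{a-1}+1)^{2w}\leq 4^w w^{2w(a-1)}$, so
\[|P_i|\leq 2w^a+2\cdot 4^w\cdot w^{2wa-2w+1}.\]
Comparing the dominant term on the right to the target $2w^{3wa}=2w^{(3w)^i}$ reduces to the inequality $2\cdot 4^w\leq w^{wa+2w-1}$, which holds for every $w\geq 2$ and $a\geq 1$ (tightly at $w=2,\,a=1$, where both sides equal $32$). The case $w=1$ is handled separately: $\pp$ is then a chain, the sets $P_i$ stabilize at $\{\mathrm{min}(\pp),\mathrm{max}(\pp)\}$, and $|P_i|\leq 2$ trivially.

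The main obstacle is the algebraic verification in the inductive step; the choice of base $3w$ (rather than, say, $2w$) in the exponent leaves precisely the slack $w(3w)^{i-1}+2w-1$ in the exponent of $w$ that is needed to absorb the multiplicative constants $2\cdot 4^w$ arising from counting order ideals of $\pp|_{L_{i-1}}$ and $\pp|_{U_{i-1}}$ via Dilworth decomposition.
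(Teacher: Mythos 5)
Your proof is correct, and its skeleton is the same as the paper's: the identical base case $|P_0|\le 2w$ and the identical recurrence $|P_i|\le |P_{i-1}|+2w\cdot N_i$, where $N_i$ counts the admissible sets $R\subseteq P_{i-1}$ and the factor $2w$ accounts for $\mathrm{min}^{\pp}(P_{i-1,R})$ and $\mathrm{max}^{\pp}(P_{i-1,R})$ being antichains. Where you diverge is the bound on $N_i$: the paper identifies each downward closed subset of $L_{i-1}$ with the antichain of its maximal elements and settles for $N_i\le |P_{i-1}|^{2w}$, whereas you count ideals and filters via a Dilworth chain decomposition of the induced subposets and obtain the sharper $N_i\le(|P_{i-1}|/(2w)+1)^{2w}$. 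This refinement is not cosmetic: with the cruder bound, the closing inequality $2w^{(3w)^k}+(2w^{(3w)^k})^{2w}\cdot 2w\le 2w^{(3w)^{k+1}}$ as written in the paper does not actually hold for small parameters (at $w=2$, $k=0$ it reads $4+4^4\cdot 4=1028\le 128$), while your version closes the induction, with equality in the critical factor exactly at $w=2$, $a=1$ as you note. Your separate treatment of $w=1$ is likewise necessary (the paper omits it, and the inductive inequality degenerates there), and your observation that the sets stabilize at $\{\min(\pp),\max(\pp)\}$ for a chain is correct. The only nitpick is that after bounding the dominant term by $w^{3wa}$ you should say explicitly that the leftover term satisfies $2w^a\le w^{3wa}$, i.e.\ $2\le w^{a(3w-1)}$, so that the two halves sum to the target $2w^{3wa}$; this is immediate for $w\ge 2$, $a\ge 1$.
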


\newcommand{\pflemboundedsearch}[0]{
\begin{proof}
Induction on $k \geq 0$.  If $k=0$, then $P_{0}=\min(\mathbf{P}) \cup \max(\mathbf{P})$, 
hence $|P_0| \leq 2w$. Let $k \geq 0$, and assume inductively that $|P_{k}| \leq 2w^{(3w)^k}$. 
Since there is a bijective correspondence between downward closed sets in $L_k$ 
and antichains in (the substructure of $\mathbf{M}_\phi$) induced by $L_k$, 
the number of downward closed sets in $L_k$ is bounded above by $|P_k|^w$. 
Similarly, the number of upward closed sets in $U_k$ is bounded above by $|P_k|^w$. 
Then the number of admissible choices for $R\subseteq P_k$ is bounded above by $|P_k|^{2w}$, 
since $R$ is upward closed in $U_k$ and downward closed in $L_k$. 
For any such fixed $R$, we have $|\mathrm{min}^{\mathbf{P}}( P_{i-1,R})|\leq w$ and $|\mathrm{max}^{\mathbf{P}}( P_{i-1,R})|\leq w$. 
Then $|P_{k+1}|\leq |P_k|+|P_k|^{2w}\cdot 2w = 2w^{(3w)^k}+(2w^{(3w)^k})^{2w}\cdot 2w \leq 2w^{(3w)^{k+1}}$.
\shortversion{\qed}\end{proof}}

\longversion{\pflemboundedsearch}

We are now ready to describe the announced algorithm.  The underlying idea is 
that the characterization in Lemma~\ref{lemma:relativization} is checkable in fixed-parameter tractable 
time since $|D_i| \leq |D|$ for all $i \in [l]$, 
and $|D|$ is bounded above by a computable function of $\mathrm{width}(\pp)$ and $\|\phi\|$.  

\longshort{\begin{theorem}}{\begin{theorem}[$\star$]}
\label{th:tractparamwidthsentence}
There exists an algorithm that, 
given a poset $\pp$ and a sentence $\phi \in \mathcal{FO}(\forall,\exists,\wedge)$, 
decides whether $\pp \models \phi$ in 
$$\mathrm{exp}^4_w(O(k)) \cdot n^{O(1)}$$ 
time, 
where $w=\mathrm{width}(\pp)$, $k=\|\phi\|$, and $n=\|(\pp,\phi)\|$.
%

\end{theorem}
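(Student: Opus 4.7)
The plan is to assemble the pieces set up in Sections~\ref{sect:red} and \ref{sect:cptract}: first reduce to reduced form, then compute the relevant low-depth ``skeleton'' of $\pp$, and finally evaluate the relativized sentence by naive enumeration.

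First I would invoke the algorithm of Proposition~\ref{prop:nf} on input $(\pp,\phi)$; this runs in polynomial time and either correctly rejects or returns a sentence $\phi'\in\fo(\forall,\exists,\wedge)$ in reduced form with $\pp\models\phi'\iff\pp\models\phi$. From now on assume $\phi'$ has been produced. Compute, for each existential variable $y$ of $\phi'$, the values $\ldpt(y)$, $\udpt(y)$, its $L_{\phi'}/U_{\phi'}$ classification, and $\dpt(y)$; this is straightforward from $\mathbf{M}_{\phi'}$ in polynomial time in $k=\|\phi'\|$. Let $b$ be the maximum such depth; note $b\le k$.

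Next I would compute the sets $L_0,U_0,L_1,U_1,\ldots,L_{b+1},U_{b+1}$ of Section~\ref{subsect:depthposet} by iterating the definition: at stage $i$ enumerate all admissible $R\subseteq P_{i-1}$, form $P_{i-1,R}$, and extract minima/maxima in $\pp$. The correctness argument is already in place: by Lemma~\ref{lemma:relativization}, letting $D=P_{b+1}$ and $D_i=L_{\dpt(y_i)}$ if $y_i\in L_{\phi'}$, otherwise $D_i=U_{\dpt(y_i)}$, we have
\[
\pp\models\phi' \iff \pp\models (\forall x_1\in D)(\exists y_1\in D_1)\cdots(\forall x_l\in D)(\exists y_l\in D_l)\,C.
\]
The right-hand side is decided by the obvious $|D|^{O(k)}\cdot n^{O(1)}$ nested-quantifier enumeration: loop through all tuples $(p_1,\ldots,p_l)\in D^l$ for Abelard, and for each such tuple search exhaustively for an Eloise response $(q_1,\ldots,q_l)\in D_1\times\cdots\times D_l$ satisfying $\pp\models C(p_1,q_1,\ldots,p_l,q_l)$, which can be tested in time polynomial in $n$.

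The running-time analysis is where the iterated-exponential bound appears and is the only delicate computation. Using Lemma~\ref{lemma:boundedsearch} with $k$ replaced by $b+1\le k+1$, we obtain $|D|\le 2w^{(3w)^{k+1}}$; both the stage-by-stage computation of the $P_i$ and the enumeration step are bounded by $|D|^{O(k)}\cdot n^{O(1)}$, and a routine estimation of $|D|^{O(k)}=\exp(O(k)\log|D|)$ telescopes into the claimed $\mathrm{exp}^4_w(O(k))\cdot n^{O(1)}$ bound. I expect the main obstacle to be nothing more than bookkeeping in this final estimate and verifying that the intermediate rounds of the computation of the $P_i$'s do not blow up beyond the same tower (they do not, because at each stage one only enumerates antichain-indexed ``cuts'' of the previously-built finite set $P_{i-1}$, whose cardinalities are already controlled by the same lemma).
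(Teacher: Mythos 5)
Your overall skeleton is exactly the paper's: reduce to reduced form via Proposition~\ref{prop:nf}, compute the sets $L_i,U_i$ up to level $b+1$, invoke Lemma~\ref{lemma:relativization} to replace $\phi$ by its relativization to $D$ and the $D_i$, and bound $|D|$ via Lemma~\ref{lemma:boundedsearch}. The gap is in how you evaluate the relativized sentence. You propose to ``loop through all tuples $(p_1,\ldots,p_l)\in D^l$ for Abelard, and for each such tuple search exhaustively for an Eloise response $(q_1,\ldots,q_l)\in D_1\times\cdots\times D_l$.'' That procedure decides $(\forall x_1\in D)\cdots(\forall x_l\in D)(\exists y_1\in D_1)\cdots(\exists y_l\in D_l)\,C$, i.e.\ it lets Eloise's choice of $y_i$ depend on all of $x_1,\ldots,x_l$ rather than only on $x_1,\ldots,x_i$. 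The sentence produced by Lemma~\ref{lemma:relativization} has the alternating prefix $(\forall x_1\in D)(\exists y_1\in D_1)\cdots(\forall x_l\in D)(\exists y_l\in D_l)$, whose semantics (as fixed in Section~\ref{sect:prelim}) requires Eloise strategies of the form $g_i\colon D^i\to D_i$. Your test is strictly weaker for Abelard, so the algorithm as written can accept instances with $\pp\not\models\phi$; nothing in the reduced-form conditions licenses collapsing the prefix to $\forall^{*}\exists^{*}$, and the paper's entire apparatus of depth-respecting \emph{strategies} exists precisely because the alternation matters.

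The fix is routine and does not hurt your time bound: evaluate the relativized sentence by the standard recursive expansion of the quantifier prefix over the restricted domains (game-tree evaluation), which costs $|D|^{2l}\cdot n^{O(1)}\le |D|^{O(k)}\cdot n^{O(1)}$ and respects alternation. Once corrected, your route is actually \emph{cheaper} than the paper's: the paper enumerates all depth-respecting Eloise strategies (there are up to $|D|^{|D|^{l}}$ of them) and checks each against all $|D|^{l}$ bounded-depth Abelard plays, which is where its extra exponential levels come from, whereas a direct game-tree evaluation stays around $\mathrm{exp}^2_w(O(k))\cdot n^{O(1)}$, comfortably inside the claimed $\mathrm{exp}^4_w(O(k))\cdot n^{O(1)}$. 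Your remaining bookkeeping ($b\le k$, $|D|\le 2w^{(3w)^{k+1}}$, and the cost of building the $P_i$ stage by stage) matches the paper's proof.
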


\newcommand{\pfthtractpa}[0]{
\begin{proof}
Let $\mathcal{P}$ be any class of posets, 
and let $(\pp,\phi)$ be an instance of $\textsc{MC}(\mathcal{P},\mathcal{FO}(\forall,\exists,\wedge))$.  
Let $w=\mathrm{width}(\pp)$, $k=\|\phi\|$, and $n=\|(\pp,\phi)\|$.  

The algorithm first invokes the algorithm described in Proposition~\ref{prop:nf}, 
which either correctly decides the instance 
or returns a sentence $\phi'$ in reduced form such that $\pp \models \phi'$ 
if and only if $\pp \models \phi$; this is feasible in time $n^{O(1)}$.  

In the latter case, the algorithm constructs $D_i$ for all $i \in [l]$ and $D$ as in Lemma~\ref{lemma:relativization}; 
this is feasible in time $(l+1)|D| \cdot n^{O(1)} \leq k|D| \cdot n^{O(1)}$. 

Next, the algorithm builds all depth respecting strategies for Eloise in the Hintikka game on $\pp$ and $\phi$ 
and for each such strategy checks whether it beats all bounded depth plays by Abelard.  
Note that $D_i \subseteq D_j$ for all $i<j$ in $[l]$ 
and $D_i \subseteq D$ for all $i \in [l]$, hence there are at most $|D|^{|D|^l}$ depth respecting strategies.  
Moreover, there are $|D|^l$ bounded depth plays, 
and checking whether a strategy beats a play is feasible in $n^{O(1)}$ time; 
thus, this step is feasible in time $|D|^{|D|^l} \cdot |D|^l \cdot n^{O(1)}$.  

Combining the above steps, the total runtime is bounded above in $|D|^{|D|^{O(k)}} \cdot n^{O(1)}$.  
By Lemma~\ref{lemma:boundedsearch}, $|D|\leq 2w^{(3w)^{k+1}}$.  
Thus, the runtime is bounded above in $\mathrm{exp}^4_w(O(k)) \cdot n^{O(1)}$, 
where $\mathrm{exp}_{w}^{i+1}(x)=\mathrm{exp}_w(\mathrm{exp}_{w}^{i}(x))=w^{\mathrm{exp}_{w}^{i}(x)}$.  
The algorithm is correct by Lemma~\ref{lemma:relativization}, 
and the statement is proved.\shortversion{\qed}\end{proof}}

\longversion{\pfthtractpa}

\begin{corollary}\label{cor:bwtract}
Let $\mathcal{P}$ be a class of posets of bounded width.  Then, the problem 
$\textsc{MC}(\mathcal{P},\mathcal{FO}(\forall,\exists,\wedge))$ is fixed-parameter tractable.
\end{corollary}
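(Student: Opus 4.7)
The plan is to derive the corollary as an immediate consequence of Theorem~\ref{th:tractparamwidthsentence}. Since $\mathcal{P}$ is a class of posets of bounded width, by definition there exists a constant $c \in \mathbb{N}$ such that $\mathrm{width}(\pp) \leq c$ for every $\pp \in \mathcal{P}$. I would fix such a bound $c$ at the outset of the proof.

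Given any instance $(\pp,\phi)$ of $\textsc{MC}(\mathcal{P},\mathcal{FO}(\forall,\exists,\wedge))$, I would simply invoke the algorithm of Theorem~\ref{th:tractparamwidthsentence} on the input. Its running time is bounded by $\mathrm{exp}^{4}_{w}(O(k)) \cdot n^{O(1)}$, where $w = \mathrm{width}(\pp)$, $k = \|\phi\|$ and $n = \|(\pp,\phi)\|$. Substituting $w \leq c$ yields the upper bound $\mathrm{exp}^{4}_{c}(O(k)) \cdot n^{O(1)}$. The function $k \mapsto \mathrm{exp}^{4}_{c}(O(k))$ depends only on $k$ (with $c$ a fixed constant) and is clearly computable, so this bound has the shape $f(k) \cdot n^{O(1)}$ for a computable $f \colon \mathbb{N} \to \mathbb{N}$.

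Since the parameterization of $\textsc{MC}(\mathcal{P},\mathcal{FO}(\forall,\exists,\wedge))$ maps an instance $(\pp,\phi)$ to $\|\phi\| = k$, this is exactly the definition of fixed-parameter tractability, and the corollary follows. There is no real obstacle here: the content of the result lies entirely in Theorem~\ref{th:tractparamwidthsentence}, and the corollary is just the observation that fixing $w$ to a constant converts the elementary bound in $w$ and $k$ into a bound that is purely a function of the parameter $k$. The only mild subtlety worth mentioning is that one should note $c$ need not be known to the algorithm: the very same algorithm from Theorem~\ref{th:tractparamwidthsentence} works uniformly, and its running time automatically collapses to $f(k) \cdot n^{O(1)}$ whenever the input poset comes from $\mathcal{P}$.
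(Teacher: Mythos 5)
Your proposal is correct and matches the paper's treatment: the corollary is stated without a separate proof precisely because it follows immediately from Theorem~\ref{th:tractparamwidthsentence} by substituting the constant width bound $c$ for $w$ in the running time $\mathrm{exp}^4_w(O(k)) \cdot n^{O(1)}$, which is monotone in $w$, yielding a bound of the form $f(k)\cdot n^{O(1)}$ for a computable $f$. Your remark that the algorithm works uniformly without needing to know $c$ is a sensible additional observation.
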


\section{Fixed-Parameter Intractability}\label{sect:hardness}

%

\shortversion{In this section, we prove that model checking 
conjunctive positive logic on classes of bounded depth and bounded cover-degree posets 
is $\textup{coW}[2]$-hard, and hence unlikely to be fixed-parameter tractable \cite{FlumGrohe06}.} 
\longversion{In this section, we prove that there exist classes of posets of bounded 
depth and classes of posets of bounded cover-degree where model checking conjunctive positive 
logic is $\textup{coW}[2]$-hard; thus the problem is unlikely to be fixed parameter tractable, 
since if any $\textup{coW}[i]$-hard problem is fixed-parameter tractable ($i \geq 1$), 
then $\textup{coW}[i] = \textup{FPT} = \textup{coFPT} = \textup{W}[i]$ follows, 
which causes the Exponential Time Hypothesis to fail \cite{FlumGrohe06}.}  

We first observe the following.  Let 
$\phi_k$ be the $\mathcal{FO}(\forall,\exists,\wedge)$-sentence ($k \geq 1$) 
\begin{equation}\label{eq:posetcompl}
\forall x_1 \ldots \forall x_k \exists y_1 \ldots \exists y_k \exists w\left( \bigwedge_{i \in [k]} y_i \leq x_i \wedge \bigwedge_{i \in [k]} y_i \leq w  \right)\text{.} 
\end{equation}

\longshort{\begin{proposition}}{\begin{proposition}[$\star$]}\label{prop:cpprop}
For every poset $\mathbf{P}$ and $k \geq 1$, 
$\mathbf{P} \models \phi_k$ 
iff  
for every $k$ elements $p_1,\ldots,p_k \in \min(\mathbf{P})$, 
there exists $u \in P$ such that $p_1,\ldots,p_k \leq^\mathbf{P} u$.
%
\end{proposition}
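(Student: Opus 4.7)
The plan is to prove both directions via the Hintikka game characterization of truth for $\phi_k$ recalled in Section~\ref{sect:prelim}, exploiting the fact that minimal elements are fixed points of any $\leq^\pp$-below choice.

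For the forward direction, I will assume $\pp \models \phi_k$ and fix arbitrary $p_1,\ldots,p_k \in \min(\pp)$. I will then consider the Hintikka game play in which Abelard assigns $x_i \mapsto p_i$ for each $i \in [k]$. Since Eloise has a winning strategy, she can respond with elements $b_1,\ldots,b_k,c \in P$ satisfying $b_i \leq^{\pp} p_i$ and $b_i \leq^{\pp} c$ for every $i$. Minimality of $p_i$ forces $b_i = p_i$, so $p_i \leq^{\pp} c$ for all $i$, and $u := c$ witnesses the desired upper bound.

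For the backward direction, I will assume the upper-bound condition on minimal elements, and construct a winning strategy for Eloise in the game on $\pp$ and $\phi_k$. Given an arbitrary Abelard play $x_i \mapsto q_i$ ($i \in [k]$), Eloise first selects, for each $i$, a minimal element $p_i \in \min(\pp)$ with $p_i \leq^{\pp} q_i$; such a $p_i$ exists because $\pp$ is finite (any descending chain from $q_i$ terminates at a minimal element). By hypothesis applied to $p_1,\ldots,p_k$, there is $u \in P$ with $p_i \leq^{\pp} u$ for all $i$. Eloise then plays $y_i \mapsto p_i$ and $w \mapsto u$; this satisfies $y_i \leq^{\pp} x_i$ (from $p_i \leq^{\pp} q_i$) and $y_i \leq^{\pp} w$ (from $p_i \leq^{\pp} u$), so the conjunction holds.

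Neither direction presents a real obstacle; the only subtlety is making sure the witness $p_i$ in the backward direction lies in $\min(\pp)$, which is handled by descending to a minimal element below $q_i$ in the finite poset $\pp$. Combining both directions yields the stated equivalence.
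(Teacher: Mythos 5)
Your proof is correct and follows essentially the same route as the paper's: the forward direction instantiates the universal variables at the given minimal elements and uses minimality to force $y_i = p_i$, and the backward direction descends from Abelard's choices to minimal elements (using finiteness of $\mathbf{P}$) and applies the hypothesis to obtain the witness for $w$. The only difference is presentational — you phrase the argument explicitly in terms of the Hintikka game, while the paper argues directly with the semantics — so there is nothing further to add.
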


\newcommand{\pfpropcpprop}[0]{\begin{proof}
Let $\mathbf{P}$ be a poset and let $k \geq 1$.  If $\mathbf{P} \models \phi_k$, 
then every $k$ elements $r_1,\ldots,r_k \in P$ 
have $k$ lower bounds $p_1,\ldots,p_k \in P$, 
without loss of generality minimal in $\mathbf{P}$, 
having a common upper bound $u \in P$.  Conversely, 
if $r_1,\ldots,r_k$ are any $k$ elements in $\mathbf{P}$, 
then let $p_1,\ldots,p_k$ be $k$ minimal elements in $\mathbf{P}$ 
such that $p_i \leq^{\mathbf{P}} r_i$ for all $i \in [k]$.  
By hypothesis, there exists $u \in P$ 
such that $p_1,\ldots,p_k \leq^{\mathbf{P}} u$, 
hence $\mathbf{P} \models \phi_k$.  
\end{proof}}

\longversion{\pfpropcpprop}

\newcommand{\constrcoverdegree}[0]{
Let $\mathbf{H}$ 
be a hypergraph.  Then, $c(\mathbf{H})=\mathbf{P}$ is the poset defined as follows.  
\begin{itemize}
\item The set of minimal (respectively, maximal), 
elements in $\mathbf{P}$ is $\min(\mathbf{P})=H$ (respectively, $\max(\mathbf{P})=\sigma$).  
\item Let $h \in \min(\mathbf{P})$, let $\mathrm{notin}(h)=\{ U \in \sigma \mid h \not\in U^\mathbf{H} \}$, 
and let $\mathbf{T}_h=(T_h,E^{\mathbf{T}_h})$ be a binary tree rooted at $h$, oriented away from $h$, 
whose outdegree zero nodes correspond exactly to the elements of $\mathrm{notin}(h)$.  Then, 
$T_h \subseteq P$ and the cover relation of $\mathbf{P}$, restricted to $T_h$, is equal to the edge relation of $\mathbf{T}_h$.  
Here, we assume that $T_h \cap T_{h'}=\emptyset$ if $h,h' \in \min(\mathbf{P})$, $h \neq h'$.
\item Let $U \in \max(\mathbf{P})$, let $\mathrm{notin}(U)=\{ h \in H \mid h \not\in U^\mathbf{H} \}$, 
and let $\mathbf{T}_U=(T_U,E^{\mathbf{T}_U})$ be a binary tree rooted at $U$, oriented towards $U$, 
whose indegree zero nodes correspond exactly to the elements of $\mathrm{notin}(U)$.  Then, 
$T_U \subseteq P$ and the cover relation of $\mathbf{P}$, restricted to $T_U$, is equal to the edge relation of $\mathbf{T}_U$.  
Here, we assume that $T_U \cap T_{U'}=\emptyset$ if $U,U' \in \max(\mathbf{P})$, $U \neq U'$.
\item For all $h \in \min(\mathbf{P})$ and $U \in \max(\mathbf{P})$, 
if $h \not\in U^\mathbf{H}$, 
$l$ is the outdegree zero node of $\mathbf{T}_h$ corresponding to $U \in \mathrm{notin}(h)$, 
and $l'$ is the indegree zero node of $\mathbf{T}_U$ corresponding to $h \in \mathrm{notin}(U)$, 
then $l \prec^\mathbf{P} l'$.  
\end{itemize}}

We now describe the reductions.  Let $\mathcal{H}$ be the class of hypergraphs 
(a \emph{hypergraph} is a $\sigma$-structure $\mathbf{H}$ 
such that $U^\mathbf{H} \neq \emptyset$ for all $U$ in a unary vocabulary $\sigma$).  
For the depth invariant, we define a function $d$ from $\mathcal{H}$ 
to a class of posets of depth at most $2$ where $d(\mathbf{H})=\mathbf{P}$ such that: 
\shortversion{$\min(\mathbf{P})=H$; $\max(\mathbf{P})=\sigma$; $h \prec^\mathbf{P} U$ for all $h \in \min(\mathbf{P})$ and $U \in \max(\mathbf{P})$ 
such that $h \not\in U^\mathbf{H}$.}
\longversion{\begin{itemize}
\item $\min(\mathbf{P})=H$;
\item $\max(\mathbf{P})=\sigma$; 
\item $h \prec^\mathbf{P} U$ for all $h \in \min(\mathbf{P})$ and $U \in \max(\mathbf{P})$ 
such that $h \not\in U^\mathbf{H}$.  
\end{itemize}}
For the cover-degree invariant, 
we similarly define a function $c$ from $\mathcal{H}$ to a class of posets with cover graphs of degree at most $3$\longversion{, as follows. \constrcoverdegree}
\shortversion{(see \cite{BovaGanianSzeiderIPEC14} for details). We then use Proposition~\ref{prop:cpprop} to obtain:}  

\longshort{\begin{proposition}}{\begin{proposition}[$\star$]}\label{prop:allhardnessresults} 
Let $r \in \{c,d\}$. Then, $\textsc{MC}(\{ r(\mathbf{H}) \mid \mathbf{H} \in \mathcal{H} \},\mathcal{FO}(\forall,\exists,\wedge))$ is $\textup{coW}[2]$-hard.\end{proposition}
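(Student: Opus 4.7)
The plan is to give an fpt-reduction from the (coW[2]-complete) complement of \textsc{Hitting Set}, parameterized by the solution size~$k$ \cite{FlumGrohe06}. Given an instance $(\mathbf{H},k)$ with $k\geq 2$ and $|H|\geq k$ (both harmless assumptions), the reduction outputs $(r(\mathbf{H}),\phi_k)$, with $\phi_k$ the sentence defined in~(\ref{eq:posetcompl}). Since $\|\phi_k\|=O(k)$ and $r(\mathbf{H})$ is constructible in polynomial time in $\|\mathbf{H}\|$, this is an fpt-reduction, and it suffices to prove that $r(\mathbf{H})\models\phi_k$ if and only if $\mathbf{H}$ admits no hitting set of size at most~$k$.

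The main combinatorial step invokes Proposition~\ref{prop:cpprop}, which makes $r(\mathbf{H})\models\phi_k$ equivalent to every $k$-tuple drawn from $\min(r(\mathbf{H}))=H$ having a common upper bound in $r(\mathbf{H})$. The key claim is that, for every $S\subseteq H$ with $2\leq |S|\leq k$, the set $S$ has a common upper bound in $r(\mathbf{H})$ if and only if some hyperedge $U\in\sigma$ satisfies $S\cap U^{\mathbf{H}}=\emptyset$; tuples collapsing to a single vertex are trivially bounded above by that vertex. For $r=d$ this is immediate, since $[h)^{d(\mathbf{H})}=\{h\}\cup\{U\in\sigma:h\notin U^{\mathbf{H}}\}$. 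For $r=c$ one must argue that the binary-tree gadgets contribute no spurious common upper bounds: the trees $T_h$ (for $h\in H$) and $T_U$ (for $U\in\sigma$) are pairwise disjoint, and the upset of $h\in H$ in $c(\mathbf{H})$ decomposes as $T_h$ together with, for each $U\in\mathrm{notin}(h)$, the upset in $T_U$ of the indegree-zero node of $T_U$ corresponding to $h$. Intersecting these upsets for distinct $h_1,\ldots,h_s$ forces any common upper bound to lie in some $T_U$ with $U\in\bigcap_i \mathrm{notin}(h_i)$, that is, with $U$ disjoint from $\{h_1,\ldots,h_s\}$; conversely, $U$ itself witnesses a common upper bound whenever such a disjoint hyperedge exists.

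Given the characterization, $r(\mathbf{H})\not\models\phi_k$ is equivalent to the existence of a subset $S\subseteq H$ with $2\leq |S|\leq k$ that hits every hyperedge, and padding such an $S$ up to size exactly~$k$ using arbitrary extra vertices (possible by $|H|\geq k$) shows this is in turn equivalent to $\mathbf{H}$ admitting a hitting set of size at most~$k$, closing the reduction. The hardest step will be the $r=c$ case of the upper-bound characterization: although geometrically clear, verifying it rigorously requires a careful tracking of upsets through the disjoint tree decomposition of $c(\mathbf{H})$ in order to rule out phantom common upper bounds lurking in the intermediate tree nodes.
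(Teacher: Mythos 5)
Your proposal is correct and follows essentially the same route as the paper: an fpt many-one reduction from the complement of \textsc{HittingSet} mapping $(\mathbf{H},k)$ to $(r(\mathbf{H}),\phi_k)$ and invoking Proposition~\ref{prop:cpprop} to translate common upper bounds of $k$-tuples of minimal elements into hyperedges missed by the corresponding vertex sets. You are in fact somewhat more careful than the paper on two points it glosses over --- the padding between hitting sets of size exactly $k$ and at most $k$, and the verification for $r=c$ that the disjoint tree gadgets create no spurious common upper bounds (the paper dismisses this case as ``along the lines of'' $r=d$).
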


\newcommand{\pfpropdepthh}[0]{
\begin{proof}
Case $r=d$.  We give a fpt many-one reduction from the complement of $\textsc{HittingSet}$ 
to $\textsc{MC}(\{ d(\mathbf{H}) \mid \mathbf{H} \in \mathcal{H} \},\mathcal{FO}(\forall,\exists,\wedge))$.  
The $\textsc{HittingSet}$ problem, known to be $\textup{W}[2]$-hard \cite{FlumGrohe06}, is the problem, 
given a pair $(\mathbf{H},k)$ where $\mathbf{H} \in \mathcal{H}$ and $k \in \mathbb{N}$, 
whether there exists $V \subseteq H$ such that $|V|=k$ and $V \cap U^{\mathbf{H}} \neq \emptyset$ for all $U \in \sigma$; 
$V$ is called a \emph{hitting set} of size $k$ of the hypergraph $\mathbf{H}$.  

Let $(\mathbf{H},k)$ be an instance of $\textsc{HittingSet}$.  
We reduce to the instance $(\mathbf{P},\phi_k)$ of $\textsc{MC}(\mathcal{P},\mathcal{FO}(\forall,\exists,\wedge))$, 
where $\mathbf{P}=d(\mathbf{H})$ and $\phi_k$ is as in (\ref{eq:posetcompl}).  
We check the correctness of the reduction (the complexity is clear).

We claim that $\mathbf{H}$ does not have a hitting set of size $k$ 
if and only if 
$\mathbf{P} \models \phi_k$.  
For the backward direction, 
by Proposition~\ref{prop:cpprop}, 
every choice of $k$ elements $h_1,\ldots,h_k \in \min(\mathbf{P})$ 
have a common upper bound $U \in \max(\mathbf{P})$.  
By construction, $h_i \not\in U^\mathbf{H}$ for all $i \in [k]$, 
that is, $\{h_1,\ldots,h_k\}$ is not a hitting set of $\mathbf{H}$.  
Thus, $\mathbf{H}$ has no hitting sets of size $k$.  For the forward direction, 
we prove the contrapositive.  Assume $\mathbf{P} \not\models \phi_k$.  By Proposition~\ref{prop:cpprop}, 
let $h_1,\ldots,h_k \in \min(\mathbf{P})$ be such that 
no $U \in P$ is a common upper bound of $h_1,\ldots,h_k$.  Let $U \in \max(\mathbf{P})$. Then 
there exists $i \in [k]$ such that $h_i \not\prec^{\mathbf{P}} U$.  
Thus, by construction, $h_i \in U^\mathbf{H}$.  Summarizing, 
for all $U \in \sigma$, there exists $i \in [k]$ such that 
$h_i \in U^\mathbf{H}$, that is, $\{h_1,\ldots,h_k\}$ is a hitting set of $\mathbf{H}$.

The case $r=c$ is proved along the lines of the case $r=d$. 
\shortversion{\qed}\end{proof}}

\longversion{\pfpropdepthh}


%

\section{Conclusion}\label{sect:concl}

We provided a parameterized complexity classification of 
the problem of model checking quantified conjunctive queries 
on posets with respect to the invariants 
in Figure \ref{fig:classification}; in particular, 
we push the tractability frontier of the model checking problem on bounded width posets 
closer towards the full first-order logic.  The question of whether first-order logic 
is fixed-parameter tractable on bounded width posets remains open.


\longversion{
We propose two research questions in classical complexity.  
First, determine the exact complexity of model checking quantified conjunctive queries 
on finite posets; by inspection of the proofs of our hardness results (Theorem~\ref{th:exprhard} and Proposition~\ref{prop:allhardnessresults}), already the $\forall^* \exists^*$ fragment of 
the problem is $\mathrm{NP}$-hard and $\mathrm{coNP}$-hard.  Second, we observed that the problem is polynomial-time tractable 
on certain posets 
(including for instance semilattices, see Corollary~\ref{cor:bwtract}) 
and hard on the bowtie poset (Theorem~\ref{th:exprhard}); these results can be phrased in terms of the 
quantified constraint satisfaction problem $\textsc{QCSP}(\mathbf{A})$, for a suitable template $\mathbf{A}$; 
it would be interesting to revisit (and possibly cover) them in the algebraic 
framework developed for the $\textsc{QCSP}$ \cite{BornerBulatovChenJeavonsKrokhin09}.
}

\subsubsection*{Acknowledgments.}  This research was supported by the European Research Council (Complex Reason, 239962) 
and the FWF Austrian Science Fund (Parameterized Compilation, P26200 and X-TRACT, P26696).

\end{document}